\newif\ifonecol
\theoremstyle{theorem}
\newtheorem{theorem}{Theorem}
\newtheorem{corollary}{Corollary}
\newtheorem{proposition}{Proposition}
\theoremstyle{definition}
\newtheorem{definition}{Definition}
\newtheorem{remark}{Remark}
\theoremstyle{remark}
\newtheorem{example}{Example}
\newcommand*\bigcdot{\mathpalette\bigcdot@{.5}}
\newcommand*\bigcdot@[2]{\mathbin{\vcenter{\hbox{\scalebox{#2}{$\m@th#1\bullet$}}}}}
\newcommand{\mA}{\mathcal{A}}
\newcommand{\mB}{\mathcal{B}}
\newcommand{\mC}{\mathcal{C}}
\newcommand{\mE}{\mathcal{E}}
\newcommand{\mF}{\mathcal{F}}
\newcommand{\mG}{\mathcal{G}}
\newcommand{\mI}{\mathcal{I}}
\newcommand{\mJ}{\mathcal{J}}
\newcommand{\mS}{\mathcal{S}}
\newcommand{\mT}{\mathcal{T}}
\newcommand{\mW}{\mathcal{W}}
\newcommand{\mX}{\mathcal{X}}
\newcommand{\mY}{\mathcal{Y}}
\newcommand{\bF}{\mathbb{F}}
\newcommand{\bS}{\mathbb{S}}
\newcommand{\bfA}{\mathbf{A}}
\newcommand{\bfB}{\mathbf{B}}
\newcommand{\bfT}{\mathbf{T}}
\newcommand{\tc}{\tilde{c}}
\newcommand{\td}{\widetilde{d}}
\newcommand{\ove}{\overline{\mathcal{E}}}
\newcommand{\tK}{\widetilde{K}}
\newcommand{\ou}{\overline u}
\newcommand{\tP}{\widetilde P}
\newcommand{\tQ}{\widetilde Q}
\newcommand{\tS}{\widetilde{S}}
\newcommand{\uwt}{\underline{\wt}}
\newcommand{\vp}{\varphi}
\DeclareMathOperator{\cs}{cs}
\newcommand{\set}[1]{\left\{{#1}\right\}}
\newcommand{\floor}[1]{\left\lfloor{#1}\right\rfloor}
\newcommand{\bnull}{\mathbf 0}
\newcommand{\hull}[1]{\left\langle{#1}\right\rangle}
\newcommand{\wt}{\mathbf{w}}
\begin{document}

\title{Convolutional Polar Kernels}
\author{Ruslan Morozov\\
ITMO University, Saint Petersburg, Russia\\
E-mail: mir4595@yandex.ru}

\sloppy

\maketitle
\begin{abstract}
A family of polarizing kernels is presented together with polynomial-complexity algorithm for computing scaling exponent.
The proposed convolutional polar kernels are based on convolutional polar codes, also known as b-MERA codes.
For these kernels, a polynomial-complexity algorithm is proposed to find weight spectrum of unrecoverable erasure patterns, needed
for computing scaling exponent.
As a result, we obtain scaling exponent and polarization rate for convolutional polar kernels of size up to 1024.
\end{abstract}
\begin{IEEEkeywords}
Polar codes, convolutional polar codes, polarizing kernel.
\end{IEEEkeywords}

\section{Introduction}
Polar codes \cite{arikan2009channel} are the first class of capacity-achieving codes.
They are based on the $N \times N$ Arikan polarizing transformation $A^{(N)}=F^{\otimes M}$,
$N=2^M$, where $F=\begin{pmatrix}1&0\\1&1\end{pmatrix}$ is called the Arikan kernel.
Many other matrices were proposed to replace kernel $F$, together with efficient corresponding kernel processing algorithms
\cite{yao2019explicit,trofimiuk2019reduced}.
Performance of polar codes with given $n\times n$ kernel $K$ depends on properties of matrix $K$,
such as polarization rate and scaling exponent \cite{mondelli2016unified, fazeli2018binary}.

Convolutional polar codes (CvPC, also b-MERA codes) are introduced in \cite{ferris2013branching}.
They are based on convolutional polarizing transformation (CvPT), which is an $n\times n$ matrix, $n=2^m$, which is \textit{not} of the form $K^{\otimes M}$.
They outperform Arikan polar codes under successive cancellation (SC) decoding \cite{prinz2018successive, saber2018convolutional,morozov2018efficient} due to better
polarization properties.
%Their modified construction \cite{morozov2019distance} outperforms state-of-the-art polar subcodes
%\cite{trifonov2017randomized} under list successive cancellation decoding \cite{tal2015list} both with the
%same list size and with the same complexity, in most cases due to better distance properties of convolutional polar subcodes.

More precisely, consider kernel $K$ and codeword $c_0^{n-1}=u_0^{n-1}K$. On each phase $\vp$, the SC decoder, trying to estimate $u_\vp$, considers probabilities of two cosets: $(\hat u_0^{\vp-1},0,u_{\vp+1}^{n-1})K$ and $(\hat u_0^{\vp-1},1,u_{\vp+1}^{n-1})K$,
where $u_{\vp+1}^{n-1}$ runs over all possible binary vectors of length $n-\vp-1$, and $\hat u_0^{\vp-1}$ are already estimated
symbols. Note that the difference between  (XOR of) any two vectors from the cosets is a vector from the set $C_\vp=\set{(0_0^{\vp-1},1,u_{\vp+1}^{n-1})K}$.
Consider a ``dominating set'' of $C_\vp$, i.e., set  $\overline C_\vp=\set{\overline a_0^{n-1}|\exists a_0^{n-1}\in C_\vp: \forall i: \overline a_i\geq a_i}$.
Note that in the case of BEC, set $\overline C_\vp$ describes all erasure patterns, after which one cannot recover $u_\vp$.
Polarization properties of $K$ depend on the weight distributions of $\overline C_\vp$ for each $\vp$.
In some sense, matrix $Q^{(n)}$ has better weight distributions of $\overline C_\vp$ then the Arikan polarizing transformation
$F^{\otimes m}$ of the same size $n=2^m$.

The weight distributions of $\overline C_\vp$ allow one to obtain scaling exponent and polarization rate of a kernel.
In this paper we derive them for kernel $Q^{(n)}$, based on the recursive expansion $Q^{(n)}=(X^{(n)}Q^{(n/2)},Z^{(n)}Q^{(n/2)})$, where $(A,B)$ means concatenation of matrices $A$ and $B$.
Matrices $X^{(n)}$ and $Z^{(n)}$ are of size $n\times n/2$, and their rank is $n/2$.
They have diagonal-like structure, i.e. all positions of $1$'s are not far from diagonal $\set{(2j,j), 0\leq j<n/2}$, which results in simple recursive relations between weight distributions of $\overline C_\vp$ for  $Q^{(n/2)}$ and $Q^{(n)}$.
In this paper we prove these relations, which lead to an algorithm of computing scaling exponent for $Q^{(n)}$
for any $n$ with polynomial complexity in $n$.

%%%%%%%%%%%%%%%%%%%%%%%%%%%%%%%%%%%%%%%%%%%%%%%%%%%%%%%%%%%%%%%%%%%%%%%%%%%%%%%%%%%%

\section{Background}
\label{s:bg}
\subsection{Notations}
The following notations are used in the paper. 
$\bF$ denotes the Galois field of two elements.
For integer $n$ we denote the set $[n]=\{0,1,\ldots n-1\}$.
Symbol $a_b^c$ denotes vector $(a_b,a_{b+1},\ldots, a_c)$.
For $m \times n$ matrix  $A$ and sets $\mX \subseteq [m]$, $\mY \subseteq [n],$ by $A_{\mX,\mY}$ we denote the submatrix of $A$ with rows from set $\mX$ and columns from set $\mY$, where indexing of rows and columns starts from zero.
Notation $c_{\mX}$ is defined similarly for vector $c$.
If  $\mX=*$ or $\mY=*$, this means that all rows or all columns of the original matrix are in the submatrix. 
Symbol $A_{\overline \mX,\overline \mY}$ denotes a submatrix of $A$ consisting  of rows and columns with indices that are not in $\mX$ and $\mY$, respectively.
The vector of $i$ zeroes is denoted by $\bnull^i$, or just by $\bnull$, if $i$ is clear from the context.
We also use symbol $(a,b)$ for concatenation of vectors/matrices/elements $a$ and $b$.
Also we use strings of $0$'s and $1$'s for an explicit binary vector, e.g. $110=(1,1,0)$.

%%%%%%%%%%%%%%%%%%%%%%%%%%%%%%%%%%%%%%%%%%%%%%%%%%%%%%%%%%%%%%%%%%%%%%%%%%%%%%%%%%%%

\subsection{Polar Codes }
\label{ss:sc}
In this paper we consider polar codes, defined as a set of vectors
\begin{align}
c_{[N]}=u_{[N]}K^{\otimes M}, u_{\mF}=\bnull^{N-k}, u_{\mI}\in\bF^k,
\label{eq:pcdef}
\end{align}
where $K$ is an $n\times n$ invertible matrix over $\bF$, which is not upper-triangular under any column permutation, $\mF\subset [N]$, $|\mF|=N-k$, $\mI=[N]\setminus\mF$, and symbol $K^{\otimes M}$ denotes the $M$-times Kronecker product of $K$ with itself.
The length of the code is $N=n^M$, the dimension is $k$.
Matrix $K$ is called the kernel.

Consider transmission of codeword $c_{[N]}=\ou_{[N]}K^{\otimes M}$ through a binary-input memoryless channel $\mW:\bF\to\mY$.
The SC\ decoding algorithm makes successive estimations $\hat u_\vp$ of symbols $\overline u_\vp$, $\vp\in[N]$.
On phase $\vp$, for $u_\vp\in\bF$ the SC decoding algorithm calculates the value of $W^{(\vp)}_N(y_0^{N-1}, \hat u_{[\vp]}|u_\vp)$, defined as
\begin{align}
W^{(\vp)}_N(y_{[N]}, u_{[\vp]}|u_{\vp}) = 2^{-N}\cdot\!\!\!\!\!\!\!\!\!\!\sum_{u_{\vp+1}^{N-1} \in \bF^{N-\vp-1}}\!\!\!\!\!\!\!\!\! \mW^N(y_{[N]}|u_{[N]}K^{\otimes M}),
\label{eq:wdef}
\end{align}
where $\mW^N(y_{[N]}|c_{[N]})=\prod_{i=0}^{N-1}\mW(y_i|c_i)$.
Then, the estimation of $\overline u_\vp$ is made by
\begin{align}
\hat u_\vp=\begin{cases}
0, &\vp\in\mF \\
\arg \displaystyle\max_{u_{\vp}\in\bF}W^{(\vp)}_N(y_{[N]},\hat u_{[\vp]}|u_\vp) , &\vp \in \mI.
\end{cases}
\label{eq:hd}
\end{align}

Computing \eqref{eq:wdef} can be done recursively by
\ifonecol
\begin{align}
W^{(ni+j)}_N(u_0^{ni+j}|y_0^{N-1})=
\sum_{u_{ni+j+1}^{ni+n-1}}\prod_{s=0}^{n-1}\!W_{N/n}^{(j)}\left((u_{nt}^{nt+n-1}K)_s,t\in[j\!+\!1]\big|y_{sN/n}^{sN/n+N/n-1}\right)\!.
\label{eq:wrec}
\end{align}
\else
\begin{align}
&W^{(ni+j)}_N(u_0^{ni+j}|y_0^{N-1})=\nonumber\\
&\sum_{u_{ni+j+1}^{ni+n-1}}\prod_{s=0}^{n-1}\!W_{N/n}^{(j)}\left((u_{nt}^{nt+n-1}K)_s,t\in[j\!+\!1]\big|y_{N/ns}^{N/ns+N/n-1}\right)\!.
\label{eq:wrec}
\end{align}
\fi
If transmitted $\ou_i\in\bF$ are uniformly distributed, then \eqref{eq:wrec} is equal to \eqref{eq:wdef} multiplied by a constant which does not affect maximization \eqref{eq:hd}.
Computing \eqref{eq:wrec} on one layer of recursion for all $j\in[n]$ is called \textit{kernel processing}.
\subsection{Scaling Exponent and Polarization Rate}
In this paper we consider two polarization properties of a kernel, namely, scaling exponent and polarization rate, which can be used to estimate performance of polar codes with a given kernel.
%These characteristics can be obtained from distance properties of linear spaces, generated by bottom rows of a kernel.

Polar codes are based on the polarization phenomenon, i.e., some part of channels $W^{(\vp)}_N$ tend to the noiseless channel, and others tend to complete noise with $N\to\infty$.
The Bhattacharyya parameter of a binary-input channel $W$ with output alphabet $\mY$ is used as an upper bound on error probability of channel $W$.
It is defined as
\begin{align}
Z(W)=\sum_{y\in\mY}\sqrt{W(y|0)W(y|1)}.
\label{eq:bhadef}
\end{align}

\textit{Scaling exponent} \cite{fazeli2014scaling,hassani2014finitelength} is defined for channel $W$ and kernel $K$ as number $\mu(W,K)$, such that there exists a finite non-zero value of
\begin{align}
\lim_{N\to\infty}\frac{\#\set{i|\epsilon < Z(W^{(i)}_N)<1-\epsilon'}}{N}\cdot N^{1/\mu(W,K)}
\end{align}
for any $0<\epsilon<1-\epsilon'<1$, where $N=n^M$.
Such number is not yet proven to exist.
We assume it exists (this assumption is also known as the scaling assumption \cite{yao2019explicit}).

\textit{Polarization rate} is defined for a kernel (independent of the underlying channel) as number $E(K)$, such that:
\begin{align*}
\forall \beta < E(K):& \liminf_{N\to\infty}\frac{\#\set{i|Z(W^{(i)}_N)\leq 2^{-n^{N\beta}}}}{N}=I(W),\\
\forall \beta > E(K):& \liminf_{N\to\infty}\frac{\#\set{i|Z(W^{(i)}_N)\geq 2^{-n^{N\beta}}}}{N}=1,
\end{align*}
where $I(W)$ denotes the capacity of channel $W$.

\subsection{Convolutional Polarizing Transformation}
Convolutional polar codes \cite{ferris2017convolutional} (CvPCs) are a family of linear block codes of length $n=2^m$.
The generator matrix of a CvPC consists of rows of $n \times n$ non-singular matrix $Q^{(n)}$, called convolutional polarizing transformation
(CvPT), defined as
\begin{align}
Q^{(n)}=\left(X^{(n)}Q^{(n/2)},Z^{(n)}Q^{(n/2)}\right),
\label{eq:qdef}
\end{align}
where $Q^{(1)}=(1)$, $X^{(l)}$ and $Z^{(l)}$ are $l\times l/2$ matrices, defined for even $l$ as
\begin{align}
&X^{(l)}_{i,j} =\begin{cases} 1, & \text{if } 2j\leq i \leq 2j+2\\
0, & \text{otherwise}
\end{cases}
\label{eq:xdef}
\\
&Z^{(l)}_{i,j} =\begin{cases} 1, & \text{if } 2j< i \leq 2j+2\\
0, & \text{otherwise}
\end{cases}
\label{eq:zdef}
\end{align}
For example, 
$$X^{(8)}=\begin{pmatrix}11100000\\00111000\\00001110\\00000011\end{pmatrix}^T,
Z^{(8)}=\begin{pmatrix}01100000\\00011000\\00000110\\00000001\end{pmatrix}^T.$$
Expansion \eqref{eq:qdef} corresponds to one layer of the CvPT, which is depicted in Fig.~\ref{fig:cvpt}.
The $m$-th layer of the CvPT is a mapping of vector $u_0^{n-1}$ to vectors $x_0^{n/2-1}=u_0^{n-1}X^{(n)}$ and $z_0^{n/2-1}=u_0^{n-1}Z^{(n)}$,
where
\ifonecol
\begin{align}
x_i=u_{2i}+u_{2i+1}+u_{2i+2}, z_i=u_{2i+1}+u_{2i+2},i\leq\frac{n}{2}-2; \;
x_{n/2-1}=u_{n-2}+u_{n-1}, z_{n/2-1}=u_{n-1}.
\label{eq:xz}
\end{align}
\else
\begin{align}
x_i=u_{2i}+u_{2i+1}+u_{2i+2},\; &z_i=u_{2i+1}+u_{2i+2},i\leq\frac{n}{2}-2; \nonumber\\
x_{n/2-1}=u_{n-2}+u_{n-1}, \;&z_{n/2-1}=u_{n-1}.
\label{eq:xz}
\end{align}
\fi
\begin{figure}
\centering
\ifonecol
\begin{tikzpicture}[x=.8cm,y=.8cm]
\else
\begin{tikzpicture}[x=0.475cm,y=0.35cm]
\small
\fi
%% DEFINES %%

\def\r{0.2}
\def\g{0.43}

%% HORIZONTAL LINES %%

\foreach \x in {1,...,4,6,7,...,9}{
  \draw (1,\x)--(4,\x);
}
\foreach \x in {1,3,4,5,8,9}{
\draw (6,\x)--(8,\x);
}
\foreach \x in {1,3,4,5,8,9}{
\draw (11,\x)--(11.5,\x);
}

%% VERTICAL LINES & CIRCLES %%

\foreach \x in {2,7}{
\draw (1.5,\x)--(1.5,\x+1+\r);
\draw (1.5,\x+1) circle (\r);
}

\foreach \x in {1,3,8,6}{
\draw (2,\x)--(2,\x+1+\r);
\draw (2,\x+1) circle (\r);
}

%% RECTANGLES %%

\draw (4,0) rectangle (6,10);
\draw (8,0.75) rectangle (11,4.25);
\draw (8,4.75) rectangle (11,9.25);

%% SYMBOLS u, s, R, Q %%

\node at (2.5,5) {\ldots};
\node at (7,6.5) {\ldots};
\node at (7,2) {\ldots};
\node at (12,6.5) {\ldots};
\node at (12,2) {\ldots};

\node[rotate=45] at (5,5) {permute};

\foreach \x in {0,...,3}{
  \node at (0.2,9-\x) {$u_\x$};
}
\foreach \x in {0,1}{
\node at (3.125,9+\g-2*\x) {$x_\x$};
\node at (3.125,8+\g-2*\x) {$z_\x$};
}
\foreach \x in {1,...,4}{
  \node at (0.2,\x) {$u_{n-\x}$};
}
\foreach \x in {1,2}{
\node at (3.125,-1+\g+2*\x) {$z_{n-\x}$};
\node at (3.125,\g+2*\x) {$x_{n-\x}$};
}

\node at (7,1+\g) {$z_{\frac{n}{2}-1}$};
\node at (7,3+\g) {$z_{1}$};
\node at (7,4+\g) {$z_{0}$};
\node at (7,5+\g) {$x_{\frac{n}{2}-1}$};
\node at (7,8+\g) {$x_1$};
\node at (7,9+\g) {$x_0$};

\node at (9.5,7) {$Q^{(n/2)}$};
\node at (9.5,2.5) {$Q^{(n/2)}$};

\node at (12.5,0.9) {$c_{n-1}$};
\node at (12.5,2.9) {$c_{n/2+1}$};
\node at (12.5,3.9) {$c_{n/2}$};
\node at (12.5,4.9) {$c_{n/2-1}$};
\node at (12.5,7.9) {$c_1$};
\node at (12.5,8.9) {$c_0$};

\end{tikzpicture}
\caption{Convolutional polarizing transformation $Q^{(n)}$}
\label{fig:cvpt}
\end{figure}
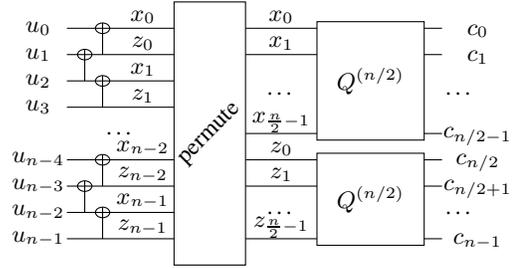

\subsection{Polarization Behavior (PB)}
For a given kernel $K$, scaling exponent for BEC\ and polarization rate 
can be obtained from so-called polarization behaviour, which is defined as follows.

Consider transmission of codeword $c_0^{n-1}=u_0^{n-1}K$ through BEC $\mW$.
Denote by $\mE\subseteq [n]$ the erasure configuration, i.e., the set of erased positions of $c_0^{n-1}$.
Consider phase $\vp$ of SC decoding. Assume that all $u_0^{\vp-1}$ was estimated correctly.
Assume for simplicity $u_0^{\vp-1}=\bnull^\vp$ (otherwise we can set $\tilde c_0^{n-1}=c_0^{n-1}+u_0^{\vp-1}K_{[\vp],*}$).
Each non-erased symbol $c_j, j\in\overline\mE=[n]\setminus\mE$  can be expressed as 
$
c_j=\sum_{i=\vp}^{n-1}u_iK_{i,j}=u_i^{n-1}K_{\overline{[\vp]},\set{j}}, \;j\in\overline\mE,
$
where symbol $\bigcdot$ denotes dot product of two vectors with the same dimension over $\bF$. Given $c_{\ove}$, the receiver can compute any linear combination $\sum_{j\in\ove} b_jc_j$, which is also a linear combination of input symbols $u_{\vp}^{n-1}$. The receiver can recover any linear combination of the form
\ifonecol
\begin{align}
\sum_{j\in\ove}b_jc_j=\sum_{j\in\ove}b_j\sum_{i=\vp}^{n-1}u_iK_{i,j}=\sum_{i=\vp}^{n-1}u_i\sum_{j\in\ove}b_jK_{i,j}
=u_{\vp}^{n-1}\ \bigcdot p_0^{n-\vp-1},\; p_0^{n-\vp-1}\in\cs\hat K,
\label{eq:rec}
\end{align}
\else
\begin{align}
&\sum_{j\in\ove}b_jc_j=\sum_{j\in\ove}b_j\sum_{i=\vp}^{n-1}u_iK_{i,j}=\sum_{i=\vp}^{n-1}u_i\sum_{j\in\ove}b_jK_{i,j}
\nonumber\\&
=u_{\vp}^{n-1}\ \bigcdot p_0^{n-\vp-1},\; p_0^{n-\vp-1}\in\cs\hat K,
\label{eq:rec}
\end{align}
\fi
where $\hat K=K_{\overline{[\vp]},\ove}$ and $\cs \hat K$ denotes the column space of matrix $\hat K$.
Symbol $u_{\vp}$ corresponds to linear combination $u_{\vp}^{n-1}\bigcdot~(1,\bnull^{n-\vp-1})$.
Thus, $u_{\vp}$ is erased iff $(1,0,...,0)\notin \cs \hat K$.

\begin{definition}
\label{d:pb}
\textit{Polarization behavior (PB)} of $n\times n$ kernel $K$ is a collection of $n$ polynomials $P^{(0)}(x),...,P^{(n-1)}(x)$, where each polynomial  $P^{(\vp)}(x)=\sum_{w=0}^{n}A_wx^w$ is the weight enumerator of erasure configurations that erase $u_\vp$:
$$
A_w=\left|\set{\mE\subseteq[n]\ \big|\ (1,\bnull^{n-\vp-1})\notin\cs K_{\overline{[\vp]},\ove} \text{ and }|\mE|=w}\right|.
$$
\end{definition}

Knowing PB, one can compute scaling exponent for BEC by the algorithm presented in \cite{hassani2014finitelength}.
In the following section, we present an algorithm for computing PB of $K=Q^{(n)}$.

%%%%%%%%%%%%%%%%%%%%%%%%%%%%%%%%%%%%%%%%%%%%%%%%%%%%%%%%%%%%%%%%%%%%%%%%%%%%%%%%%%%%%%%%%%%%%%%%%%%%%%%%%%%%%%%%%%%%%%%

\section{Computing Scaling Exponent for Convolutional Polar Kernel \label{s:se}}
\subsection{General Description of the Algorithm}
Our algorithm for computing scaling exponent for CvPK consists of three steps:
\begin{enumerate}
\item Compute generalized polarization behaviour (GPB) of CvPK by the recursion, described in Section~\ref{s:gpbcvpk}.
\item Convert GPB to PB, as given in Section~\ref{s:gpb2pb}.
\item Given PB for CvPK, compute scaling exponent for BEC by the algorithm, presented in \cite{hassani2014finitelength} (we do not describe it in this paper).
\end{enumerate}

The proposed algorithm is similar to the algorithm in \cite{morozov2019distance} for computing partial distances of CvPT.
After publishing \cite{morozov2019distance} we found that partial distances of CvPT can be computed with much simpler algorithm \cite{morozov2019simplified}.
However, computing PB of CvPK requires one to fully employ the idea of \cite{morozov2019distance}.
Furthermore, we believe that our approach can be extended to compute PB for an arbitrary kernel.

We provide a list of variables, used in this section, in Table~\ref{t:not} to simplify the reader's life.
\begin{table}
\centering
\caption{The summary of notations.}
\label{t:not}
\begin{tabular}{|m{0.21\linewidth}|m{0.64\linewidth}|}
\hline
$\bF$ & The binary field
\rule{0pt}{8pt}
\\\hline
kernel $K$ & Any non-singular binary $n\times n$ matrix which is not upper-triangular under any column permutation
\\\hline
$\cs A$ & The column space of matrix $A$
\rule{0pt}{9pt}
\\\hline
$[n]$ & Set $\set{0,1,...,n-1}$
\rule{0pt}{9pt}
\\\hline
$\overline \mS$ & For a set $\mS\subseteq[n]$, the complement to $[n]$
\rule{0pt}{9pt}
\\\hline
$a_{\mA}$ & \rule{0pt}{8pt}A subvector of vector $a_0^{t-1}=a_{[t]}$ with ascending indices from set $\mA\subseteq[t]$%
\\\hline
$u_{[n]}$ & Input vector, which is multiplied by kernel $K$
\\\hline
$c_{[n]}$ & Output vector $c_{[n]}=u_{[n]}K$
\rule{0pt}{9pt}
\\\hline
$\vp$ & The phase of SC decoding; the number of first elements of $u$ that we have already estimated correctly.
Due to linearity we assume $u_{[\vp]}=\bnull$
\\ \hline
erasure configuration $\mE$ & \rule{0pt}{7pt}%
The set of erased positions $\mE\subseteq [n]$ of $c_{[n]}$.
After erasures, the receiver knows $c_{\ove}$
\rule{0pt}{7pt}
\\\hline
$\mE'$, $\mE''$ &
\rule{0pt}{7pt}%
Given the erasure configuration $\mE$ of $c_{[n]}$, $\mE'$ is the e.c. of $c_{[n/2]}$ and $\mE''$ is the e.c. of $c_{n/2}^{n-1}$
\rule{0pt}{5pt}
\\\hline
$P^{(\vp)}(x)$ & For an $n\times n$ kernel $K$, the weight enumerator polynomial of erasure configurations of $c_{[n]}=u_{[n]}K$
that erase input symbol $u_\vp$.
Monomial $ax^b$ means that there are $a$ such erasure configurations of cardinality $b$
\\\hline
PB, polarization behaviour (Def.~\ref{d:pb}) & The collection of $P^{(\vp)}(x)$ for each $\vp$
\\\hline
$\bS_J$ & The set of all linear subspaces of $\bF^J$ ($\bS_J\subseteq 2^{\bF^J}$)%
\rule{0pt}{10pt}
\\\hline
$a\bigcdot b$ & Dot product $\sum_i a_ib_i$ of vectors $a$ and $b$%
\rule{0pt}{8pt}
\\\hline
$(\mE,\vp)$-recoverable vector (Def.~\ref{d:chi})& \rule{0pt}{8pt}%
Any vector $p\in\bF^3$, s. t. the value of $p\bigcdot u_{\vp}^{\vp+2}$ can
be computed from subvector $c_{\ove}$ of codeword $c_{[n]}=u_{[n]}K$. This condition is equivalent to $(p,\bnull)\in\cs K_{\overline{[\vp]}, \ove}$ 
\\\hline
$\chi_{\vp}(\mE)$ (Def.~\ref{d:chi}) &
\rule{0pt}{8pt}%
The set of all $(\mE,\vp)$-recoverable vectors (the kernel is assumed to be clear from the context)
\\\hline
$P^{(\vp,\mS)}(x)$ & For an $n\times n$ kernel $K$, the weight enumerator polynomial of erasure configurations $\mE$ for
which $\chi_\vp(\mE)=\mS$
\\\hline
GPB, generalized PB (Def.~\ref{d:gpb}) &
The collection of polynomials $P^{(\vp,\mS)}(x)$ for each $\vp\in [n-2]$ and $\mS\in\bS_3$.
\\\hline
$\hull{001,101}$ & The set of all linear combinations of vectors listed inside $\hull{}$. By default, $\hull{}=\set{\bnull}$
\\\hline
\end{tabular}
\end{table}

\subsection{Generalized Polarization Behaviour (GPB)}
Polarization behaviour (PB)\ characterizes weight spectrum of erasure configurations that erase $u_\vp$.
We found no simple recursion for convolutional polar kernel $K=Q^{(n)}$, that, given PB of $Q^{(n/2)}$, allows one to obtain PB of $Q^{(n)}$.
However, we can obtain recursive formulae for enumerators which count erasure configurations that erase some linear combinations of symbols $u_{\vp}^{\vp+2}$.
Thus, after we generalize the definition of PB to GPB, the GPB of $Q^{(n)}$ can be computed recursively and then converted
to PB.

Assume that the receiver knows $u_0^{\vp-1}$.
Consider linear combination $p_0^2\bigcdot u_{\vp}^{\vp+2}$ of three adjacent input symbols $u_{\vp}^{\vp+2}$ for some given $p_0^2\in\bF^3$.
Recalling \eqref{eq:rec}, one can see that this linear combination can be recovered after erasure configuration $\mE$ iff $(p_0^2,\bnull^{n-\vp-3})\in \cs\hat K$, where $\hat K=K_{\overline{[\vp]},\ove}$.
\begin{definition}
\label{d:chi}
Vector $p_0^{2}$ is $(\mE,\vp)$\textit{-recoverable vector} for kernel $K$, iff $(p_0^2,\bnull^{n-\vp-3})\in\cs K_{\overline{[\vp]}, \ove}$.
The set of  $(\mE,\vp)$-recoverable vectors is denoted by $\chi_{\vp}(\mE)$ (following Greek word $\chi\acute{\omega}\rho o\varsigma$
meaning ``space'').
\end{definition}

It is easy to see that the set $\chi_\vp(\mE)$ is indeed a linear subspace of $\bF^3$, which we write as $\chi_\vp(\mE)\in\bS_3$,
denoting by $\bS_3$ the set of all linear subspaces of $\bF^3$. Throughout the paper, a  subspace of $\bF^3$ is specified by its basis vectors, which are comma-separated strings of $0$ and $1$ listed inside triangular brackets, e.g. $\hull{001, 110}=\hull{001, 111}=\set{\bnull^3, 001, 110, 111}$.

For the sake of convenience, attach index $i\in[16]$ to each subspace $\mT_i\in\bS_3$ of $\bF^3$ (see Table~\ref{t:gpb4}).

In the case of $Q^{(4)}$, $c_{[4]}=(u_0+u_1+u_3,u_0+u_2,u_1+u_2,u_0+u_1+u_2+u_3)$.
After each erasure configuration $\mE\subseteq[4]$ the receiver knows $c_j$ for all $j\notin \mE$, and it can compute all
linear combinations (LCs) of symbols $c_j$.
These LCs correspond to some linear combinations of $u_{[4]}$. 

On phase $\vp=0$, we are interested only in LCs of $u_0^2$, i.e., expressions $p_{[4]}\bigcdot u_{[4]}$ which do not include $u_3$, or, equivalently, when $p_3=0$.
All such $p$'s constitute some set $\mT_i=\chi_0(\mE)$.

For the case of $\vp=1$, we assume that we know exactly the value of $u_0$ and we can subtract it from $c_0$.
Thus, we can assume that $u_0=0$ and $\tc=(u_1+u_3,u_0+u_2,u_1+u_2,u_0+u_1+u_2+u_3)$.
After erasure configuration $\mE$, the receiver knows $\tc_j$, $j\notin\mE$, and all their linear combinations, which lead
to $p_{[3]}\bigcdot u_1^3$ for some $p$'s.
All such $p$'s form the set $\mT_i=\chi_1(\mE)$.

\begin{example}
Let us compute $\chi_0(\set{0,3})$ for $Q^{(4)}$.
In this case, $\vp=0$, $\mE=\set{0,3}$, $c_{\ove}=c_{\set{1,2}}$ and
\ifonecol
\begin{align*}
K=Q^{(4)}=\begin{pmatrix} 1000\\1010\\0110\\1111\end{pmatrix},\hat K=Q^{(4)}_{*, \set{1,2}}=\begin{pmatrix}00\\01\\11\\11\end{pmatrix},
c_{\set{1,2}}=(u_2+u_3, u_1+u_2+u_3).
\end{align*}
\else
\begin{align*}
&K=Q^{(4)}=\begin{pmatrix} 1000\\1010\\0110\\1111\end{pmatrix},\hat K=Q^{(4)}_{*, \set{1,2}}=\begin{pmatrix}00\\01\\11\\11\end{pmatrix},\\
&c_{\set{1,2}}=(u_2+u_3, u_1+u_2+u_3).
\end{align*}
\fi
After erasures, the receiver knows $u_2+u_3$ and $u_1+u_2+u_3$, which are not linear combinations of symbols $u_0^2$ as they
include $u_3$. However, the sum $c_1+c_2=u_1$ is a linear combination $p_0^2 \bigcdot
u_0^2$ with $p_0^2=(010$).
Thus, $\chi_0(\set{0,3})=\hull{010}$.
Another way of thinking is to observe that $\cs \hat K=\set{\bnull^4, 0011, 0111, 0100}$.
Vectors, corresponding to linear combinations of $u_0^2$, have the last zero element.
These vectors are $\set{\bnull^4, 0100}$.
Removing the last element, which corresponds to the zero coefficient before $u_3$,  we obtain $\chi_0(\set{0,3})=\set{\bnull^3, 010}=\hull{010}$.
\end{example}

% \begin{definition}
% \label{d:xi}
% For a given $\mS\in\bS_3$, the set of all erasure configurations $\mE\subseteq[n]$ such that $\chi_\vp(\mE)=\mS$ is denoted by $\xi_\vp(\mS)\subseteq 2^{[n]}$.
% \end{definition}
Consider also the mapping $\chi^{-1}:\bS_3\to 2^{2^{[n]}}$, the inverse image of $\chi$.
In words, $\chi^{-1}_\vp(\mS)$ is the set of all erasure configurations, after which the receiver can recover linear combination $p_0^2 \bigcdot u_\vp^{\vp+2}$ if \textit{and only if} $p_0^2\in\mS$.

We can imagine this mapping as dividing all $\mE\subseteq[n]$ into $16$ ``boxes'', the $i$-th box contains those $\mE$ for which $\chi_\vp(\mE)=\mT_i$.
Thus, the $i$-th box contains exactly $\chi^{(-1)}_\vp(\mT_i)$.
\begin{example}
Let us compute $\chi_0^{-1}(\hull{110})$ for $Q^{(4)}$.
In this case, $\vp=0$, $\mS=\set{\bnull^3,110}$.
The set $\chi^{-1}_0(\hull{110})$ is the set of erasure configurations, after which the receiver can recover $u_0+u_1$ (and no other non-zero linear combination of $u_0^2$).
Consider erasure configuration $\mE_0=\set{2}$.
The receiver knows $(c_0,c_1,c_3)=(u_0+u_1+u_3,u_2+u_3,u_3)$.
It can recover $u_0+u_1=c_0+c_3$.
But it can also recover $u_2=c_1+c_3$ and $u_0+u_1+u_2=c_0+c_1$ and others, so the space corresponding to $\mE_0$ is not $\mS$, though it contains it
as a proper subset.
If we erase positions $\mE_1=\set{1,2}$, the receiver knows $(c_0,c_3)=(u_0+u_1+u_3,u_3)$, and it can compute only $c_0+c_3=u_0+u_1$.
It can be seen that there is no other erasure configuration, which leads to knowing $u_0+u_1$ and erasing all other linear combinations of symbols $u_0^2$.
So, $\chi^{-1}_0(\hull{110})=\set{\set{1,2}}$.
\end{example}

\begin{definition}
\label{d:gpb}
A \textit{generalized polarization behaviour} (GPB) for kernel $K$ is a collection of polynomials $P^{(\vp, \mS)}(x)=\sum_{w=0}^n P^{(\vp,\mS)}_wx^w$ for each
$\vp\in[n-2]$ and each $\mS\in\bS_3$, such that
\begin{align}
P^{(\vp,\mS)}_w=\left|\set{\mE\subseteq[n]\;\big| \; \chi^{(n)}_\vp(\mE)=\mS\text{ and } |\mE|=w}\right|.
\label{eq:gpbdef}
\end{align}
\end{definition}
In other words, $P^{(\vp,\mS)}(x)$ is the weight enumerator polynomial of erasure configurations in $\chi_\vp^{-1}(\mS)$.
\begin{table}
\caption{The GPB of $Q^{(4)}$}
\label{t:gpb4}
\centering
\begin{tabular}{|c|c|c|c|c|c|c|c|}
\hline
$i$&$\mT_i$&$\!\!P^{(0,\mT_i)}\!\!$&$\!\!\!P^{(1,\mT_i)}\!\!\!$&$i$&$\mT_i$&$\!\!\!P^{(0,\mT_i)}\!\!\!$&$\!\!\!P^{(1,\mT_i)}\!\!\!$
\\\hline
0&$\set{\bnull}$&$\!\!\!x^4+4x^3\!\!\!$&$x^4$ & 8&$\hull{100,010}$&$0$&$0$    
\\\hline                                             
1&$\hull{100}$&$0$&$0$            & 9&$\hull{100,001}$&$0$&$x^2$  
\\\hline                                             
2&$\hull{010}$&$x^2$&$0$          & 10&$\hull{010,001}$&$x$&$x^2$ 
\\\hline                                            
3&$\hull{001}$&$x^2$&$x^3$        & 11&$\hull{110,001}$&$x$&$x^2$ 
\\\hline                                              
4&$\hull{110}$&$x^2$&$0$          & 12&$\hull{100,011}$&$0$&$x^2$ 
\\\hline                                            
5&$\hull{101}$&$x^2$&$x^3$        & 13&$\hull{101,010}$&$x$&$x^2$ 
\\\hline                                             
6&$\hull{011}$&$x^2$&$x^3$        & 14&$\hull{110,101}$&$x$&$x^2$ 
\\\hline                                           
7&$\hull{111}$&$x^2$&$x^3$        & 15&$\bF^3$&$1$&$\!\!\!4x+1\!\!\!$         
\\\hline                                            
\end{tabular}
\end{table}

\begin{example}
\label{e:gpb}
The GPB of $Q^{(4)}$ is given in Table~\ref{t:gpb4}.
The GPB consists of polynomials $P^{(\vp,\mT_i)}(x)$ for $\vp\in[2]$ and $i\in[16]$.
\end{example}

%%%%%%%%%%%%%%%%%%%%%%%%%%%%%%%%%%%%%%%%%%%%%%%%%%%%%%%%%%
\subsection{Recursive Computation of GPB \label{s:gpbcvpk}}

Assume that we know GPB for kernel $Q^{(n/2)}$.
Recall that $c_0^{n-1}=u_0^{n-1}Q^{(n)}=(x_0^{n/2-1}Q^{(n/2)},z_0^{n/2-1}Q^{(n/2)})$.
Consider linear combination $p_0^2 \bigcdot u_{\vp}^{\vp+2}$ for some $p_0^2\in\bF^3$.
Denote the erasure configurations of left and right half of $c_0^{n-1}$ by $\mE'=\mE\cap [n/2]$ and $\mE''=\set{j-\frac{n}{2}|j\in\mE, j\geq \frac{n}{2}}$.
Then, all recoverable $p \bigcdot u_{\vp}^{\vp+2}$ follow from recoverability of $p'\bigcdot x_{\psi}^{\psi+2}$ and $p''\bigcdot z_{\psi}^{\psi+2}$ for erasure configurations $\mE'$ and $\mE''$, respectively, for some particular $\psi\in[\frac{n}{2}]$, $p',p''\in\bF^3$.
This connection is given by the following theorem.
\begin{theorem}
\label{t1}
Consider kernel $Q^{(n)}$, defined in \eqref{eq:qdef}--\eqref{eq:zdef}, $n\geq 8$.
For given $\mE\subseteq[n]$ and $0\leq \vp \leq n-3$, vector $p_0^2$ is $(\mE,\vp)$-recoverable iff 
\begin{align}
\exists p',p''\in\bF^3: (p_0^2,\bnull^{J_\vp})=p'A_{\vp}+p''B_\vp,
\label{eq:t1}
\end{align}
where $p'$ and $p''$ are $(\mE',\psi)$-recoverable and $(\mE'',\psi)$-recoverable for kernel $Q^{(n/2)}$ and $\psi=\max\set{0,\floor{\frac{\vp-1}{2}}}$.
The values of $J_\vp$, $A_\vp$, $B_\vp$ depend on $\vp$ as follows. For $\psi\leq\frac{n}{2}-3$:
\allowdisplaybreaks
\begin{align}
\label{eq:t1vp0}
&J_0\!=\!3, A_0=\bfA=\begin{pmatrix}111000\\001110\\000011\end{pmatrix},B_0=\bfB=\begin{pmatrix}011000\\000110\\000001\end{pmatrix}\\
\label{eq:t1vp1}
&J_{2\psi+1}=2, A_{2\psi+1}=\bfA_{*,\overline{[1]}}, B_{2\psi+1}=\bfB_{*,\overline{[1]}}\\
\label{eq:t1vp2}
&J_{2\psi+2}\!=\!1, A_{2\psi+2}=\bfA_{*,\overline{[2]}}, B_{2\psi+2}=\bfB_{*,\overline{[2]}}\\
\label{eq:t1vp3}
&J_{n-3}\!=\!0, A_{n-3}=\bfA_{*,\overline{[3]}},\normalsize B_{n-3}=\bfB_{*,\overline{[3]}}
\end{align}
\end{theorem}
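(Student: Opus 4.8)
The plan is to reduce the statement for $Q^{(n)}$ to the corresponding statement for $Q^{(n/2)}$ in three moves. First I would rephrase recoverability as a statement about a single vector: by Definition~\ref{d:chi}, $p_0^2$ is $(\mE,\vp)$-recoverable for $Q^{(n)}$ exactly when the column space $\cs Q^{(n)}_{*,\ove}$ contains a vector $v\in\bF^n$ with $v_{\vp+i}=p_i$ for $i\in\{0,1,2\}$, with $v_j=0$ for $\vp+3\le j<n$, and with $v_0,\dots,v_{\vp-1}$ unconstrained (those entries carry the already-known symbols $u_{[\vp]}=\bnull$). So the theorem asserts that such a ``windowed'' $v$ exists iff suitable recoverable triples $p',p''$ exist for $Q^{(n/2)}$.

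Second I would split that column space along the two halves of the codeword. Writing \eqref{eq:qdef} as $Q^{(n)}=\bigl(X^{(n)}\mid Z^{(n)}\bigr)\diag\bigl(Q^{(n/2)},Q^{(n/2)}\bigr)$ and using that $Q^{(n)}$ and $Q^{(n/2)}$ are non-singular, the $n\times n$ matrix $\bigl(X^{(n)}\mid Z^{(n)}\bigr)$ is invertible, so every $v\in\bF^n$ is \emph{uniquely} of the form $v=X^{(n)}q'+Z^{(n)}q''$ with $q',q''\in\bF^{n/2}$. Writing $\ove'$ and $\ove''$ for the non-erased positions of the left half $c_{[n/2]}$ and of the right half $c_{n/2}^{n-1}$, one has $\cs Q^{(n)}_{*,\ove}=X^{(n)}\cs Q^{(n/2)}_{*,\ove'}+Z^{(n)}\cs Q^{(n/2)}_{*,\ove''}$, so $v\in\cs Q^{(n)}_{*,\ove}$ iff in its unique decomposition $q'\in\cs Q^{(n/2)}_{*,\ove'}$ and $q''\in\cs Q^{(n/2)}_{*,\ove''}$. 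From here the whole problem is expressed through $q',q''$ and the half-size kernel.

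Third, and this is the crux, I would use the band structure of $X^{(n)},Z^{(n)}$. By \eqref{eq:xdef}--\eqref{eq:zdef}, the identity $v=X^{(n)}q'+Z^{(n)}q''$ is equivalent to the local relations $v_{2t}=q'_{t-1}+q'_t+q''_{t-1}$ and $v_{2t+1}=q'_t+q''_t$ (with $q'_{-1}=q''_{-1}=0$), which invert to $q'_0=v_0$, $q'_s=v_{2s}+v_{2s-1}$, $q''_s=v_{2s-1}+v_{2s}+v_{2s+1}$ for $s\ge1$. Two consequences carry the proof. \emph{(i) The zero tail propagates.} If $v_j=0$ for all $j\ge\vp+3$, then $q'_s=q''_s=0$ for all $s\ge\psi+3$, so $q'$ and $q''$ are supported on $\{0,\dots,\psi+2\}$; conversely, if $q',q''$ have that support and \eqref{eq:t1} holds, then $v_j=0$ for all $j>\vp+2$ (the few positions $j\in\{\vp+3,\dots,2\psi+6\}$ not already zeroed by the $\bnull^{J_\vp}$ tail of \eqref{eq:t1} turn out to coincide with coordinates of $p'A_\vp+p''B_\vp$ that vanish). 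In particular the triple $(q'_\psi,q'_{\psi+1},q'_{\psi+2})$ is $(\mE',\psi)$-recoverable for $Q^{(n/2)}$ --- with $q'$ itself as a witness and $q'_0,\dots,q'_{\psi-1}$ playing the role of the free prefix permitted by Definition~\ref{d:chi} --- and likewise for $q''$; that prefix only influences $v$ at positions below $\vp$, so the converse construction is independent of how it is chosen. \emph{(ii) The matrices $A_\vp,B_\vp$ fall out.} Restricting the local relations to the window $j=\vp,\dots,\vp+2+J_\vp$ writes $(v_\vp,\dots,v_{\vp+2+J_\vp})$ as an explicit $\bF$-linear function of the pair $\bigl((q'_\psi,q'_{\psi+1},q'_{\psi+2}),(q''_\psi,q''_{\psi+1},q''_{\psi+2})\bigr)$; since $v$ equals $(p_0,p_1,p_2,\bnull^{J_\vp})$ on that window, and a direct computation from the band pattern identifies this function with $(p',p'')\mapsto p'\bfA+p''\bfB$ after deleting the columns that index positions $<\vp$ --- i.e.\ exactly $(A_\vp,B_\vp)$ --- this is precisely \eqref{eq:t1}. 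Running both directions yields the equivalence.

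The part I expect to cost the most is the index bookkeeping in step (ii): for each residue of $\vp$ modulo $2$ one must check that the columns of $\bfA,\bfB$ removed to form $A_\vp,B_\vp$ are exactly those indexing $v$-positions below $\vp$, and that the handful of $v$-positions beyond the declared window are genuinely forced to $0$; this is what produces the case split \eqref{eq:t1vp0}--\eqref{eq:t1vp2}. The one genuinely different situation is the last phase $\vp=n-3$: there $\psi$ lies at the last layer of $Q^{(n/2)}$, one of $x_\psi,x_{\psi+1},x_{\psi+2}$ is absent from the half-size transform, and $v$ has no coordinates past $\vp+2$, so the right end of the band is truncated; the same reasoning then delivers $\bfA_{*,\overline{[3]}},\bfB_{*,\overline{[3]}}$ as in \eqref{eq:t1vp3} after a short extra check.
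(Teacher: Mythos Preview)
Your proposal is correct and follows essentially the same route as the paper: decompose through the factorization $Q^{(n)}=(X^{(n)}Q^{(n/2)},Z^{(n)}Q^{(n/2)})$, then exploit the band structure of $X^{(n)},Z^{(n)}$ to show that the equations for $v_{\vp},\dots$ reduce to the stated $3\times(3+J_\vp)$ linear map while the remaining equations force the tails of $q',q''$ to vanish. The only noticeable difference is organizational: the paper first strips the rows $[\vp]$ (working with $\hat Q=Q^{(n)}_{\overline{[\vp]},\ove}$, $\hat X=X^{(n)}_{\overline{[\vp]},\overline{[\psi]}}$, etc.) and then writes the equations directly, whereas you keep the full length-$n$ vector $v$ around, invoke invertibility of $(X^{(n)}\mid Z^{(n)})$ for a unique $(q',q'')$, and argue afterward that the prefix $q'_{[\psi]},q''_{[\psi]}$ only touches $v_{[\vp]}$; both bookkeepings lead to the same system \eqref{eq:t1p0} and the same tail-propagation argument.
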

\begin{proof}
The proof is in the Appendix~\ref{a:t1}.
\end{proof}

Theorem~\ref{t1} defines the relation between subspaces of known linear combinations of symbols $x_{\psi}^{\psi+2}$ and $z_{\psi}^{\psi+2}$ and subspace of known linear
combinations of $u_{\vp}^{\vp+2}$ for some given erasure configuration $\mE$.
Applying this relation to each $\mE\subseteq[n]$, one can compute weight enumerators of erasure configurations for each possible subspace of
linear combinations of symbols $u_{\vp}^{\vp+2}$ by the following theorem.

\begin{theorem}
\label{t2}
For given $n\geq8$, $\vp\in[n-2]$, consider the transformation $\bfT_\vp:\bS_3\times\bS_3\to\bS_3$, which maps spaces of
$p'$ and $p''$ to space of all possible $p$'s defined by \eqref{eq:t1}:
\ifonecol
\begin{align}
\bfT_\vp(\mS',\mS'')=\set{p_0^2\big|\exists p'\in\mS', p''\in\mS'': (p_0^2,\bnull^{J_\vp})=p'A_\vp+p''B_\vp},
\label{eq:tdef}
\end{align}
\else
\begin{align}
&\bfT_\vp(\mS',\mS'')=\nonumber\\
&\set{p_0^2\big|\exists p'\in\mS', p''\in\mS'': (p_0^2,\bnull^{J_\vp})=p'A_\vp+p''B_\vp},
\label{eq:tdef}
\end{align}
\fi
where $J_\vp$, $A_\vp$, $B_\vp$ are given  in \eqref{eq:t1vp0}--\eqref{eq:t1vp3}.
Denote by $P^{(\vp, \mS)}(x)$ the GPB of kernel $Q^{(n)}$ for phase $\vp$, and by $R^{(\psi, \mS)}(x)$ the GPB of kernel $Q^{(n/2)}$ for phase $\psi=\max\set{0,\frac{\vp-1}{2}}$.
Then,
\begin{align}
P^{(\vp, \mS)}(x)=\sum_{(\mS',\mS'')\in\bfT_\vp^{-1}(\mS)}R^{(\psi,\mS')}(x)\cdot R^{(\psi,\mS'')}(x),
\label{eq:t2}
\end{align}
where $\bfT_\vp^{-1}:\bS_3\to 2^{\bS_3\times \bS_3}$ is the inverse image of $\bfT_\vp$.
\end{theorem}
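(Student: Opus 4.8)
The plan is to reduce Theorem~\ref{t2} to Theorem~\ref{t1} by a straightforward counting argument, whose only ingredient besides Theorem~\ref{t1} is the obvious bijection between erasure configurations of $c_{[n]}$ and pairs of erasure configurations of the two half-length codewords. First I would record this bijection precisely: every $\mE\subseteq[n]$ both determines and is determined by the pair $(\mE',\mE'')$ with $\mE'=\mE\cap[n/2]$ and $\mE''=\set{j-n/2\mid j\in\mE,\ j\geq n/2}$, and since this is a partition of $\mE$ we have $|\mE|=|\mE'|+|\mE''|$. Here $\mE'$ is the erasure configuration of the left half $x_0^{n/2-1}Q^{(n/2)}$ and $\mE''$ that of the right half $z_0^{n/2-1}Q^{(n/2)}$, exactly as in Theorem~\ref{t1}.

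Next I would restate Theorem~\ref{t1} as an identity of subspaces. Unwinding Definition~\ref{d:chi} and the definition \eqref{eq:tdef} of $\bfT_\vp$, Theorem~\ref{t1} says exactly that
\begin{align}
\chi^{(n)}_\vp(\mE)=\bfT_\vp\!\left(\chi^{(n/2)}_\psi(\mE'),\,\chi^{(n/2)}_\psi(\mE'')\right)
\label{eq:t2plan}
\end{align}
for every $\mE\subseteq[n]$, with $\psi$ and $J_\vp,A_\vp,B_\vp$ as specified there. Along the way I would check the two easy facts that make \eqref{eq:t2plan} meaningful: that $\bfT_\vp(\mS',\mS'')$ is again a linear subspace of $\bF^3$ --- it is the projection onto the first three coordinates of the intersection of the subspace $\set{p'A_\vp+p''B_\vp\mid p'\in\mS',\ p''\in\mS''}\subseteq\bF^{3+J_\vp}$ with the coordinate subspace $\set{v\in\bF^{3+J_\vp}\mid v_3=\dots=v_{2+J_\vp}=0}$ --- so that $\bfT_\vp:\bS_3\times\bS_3\to\bS_3$ is a genuine map and its fibers $\bfT_\vp^{-1}(\mS)$, $\mS\in\bS_3$, partition $\bS_3\times\bS_3$.

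The last step is the counting. Grouping erasure configurations of $c_{[n]}$ according to the pair $(\chi^{(n/2)}_\psi(\mE'),\chi^{(n/2)}_\psi(\mE''))$ and using \eqref{eq:t2plan},
\begin{align*}
P^{(\vp,\mS)}(x)
&=\sum_{\substack{\mE\subseteq[n]\\ \chi^{(n)}_\vp(\mE)=\mS}}x^{|\mE|}
=\sum_{(\mS',\mS'')\in\bfT_\vp^{-1}(\mS)}\ \sum_{\substack{\mE',\mE''\subseteq[n/2]\\ \chi^{(n/2)}_\psi(\mE')=\mS'\\ \chi^{(n/2)}_\psi(\mE'')=\mS''}}x^{|\mE'|+|\mE''|}\\
&=\sum_{(\mS',\mS'')\in\bfT_\vp^{-1}(\mS)}\Bigl(\sum_{\chi^{(n/2)}_\psi(\mE')=\mS'}x^{|\mE'|}\Bigr)\Bigl(\sum_{\chi^{(n/2)}_\psi(\mE'')=\mS''}x^{|\mE''|}\Bigr),
\end{align*}
where the double sum factors because, once $(\mS',\mS'')$ is fixed, $\mE'$ and $\mE''$ range over $2^{[n/2]}$ independently; by Definition~\ref{d:gpb} the two inner sums are $R^{(\psi,\mS')}(x)$ and $R^{(\psi,\mS'')}(x)$, which is exactly \eqref{eq:t2}.

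I do not expect a serious obstacle, since all the substantive work is carried by Theorem~\ref{t1}; the one point requiring care is the range of $\psi$. For the largest phase $\vp=n-3$ one gets $\psi=n/2-2$, which sits just outside the nominal phase range of $Q^{(n/2)}$, so that phase must be handled via the boundary case \eqref{eq:t1vp3} together with the matching boundary version of the GPB of $Q^{(n/2)}$ (or verified directly as a small base case); for all other $\vp$ the value of $\psi$ is a legitimate phase of $Q^{(n/2)}$ and \eqref{eq:t2plan} applies verbatim.
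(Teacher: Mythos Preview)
Your proposal is correct and follows essentially the same route as the paper's proof: both derive the key identity $\chi^{(n)}_\vp(\mE)=\bfT_\vp\!\bigl(\chi^{(n/2)}_\psi(\mE'),\chi^{(n/2)}_\psi(\mE'')\bigr)$ from Theorem~\ref{t1}, invoke the bijection $\mE\leftrightarrow(\mE',\mE'')$ with $|\mE|=|\mE'|+|\mE''|$, and factor the resulting double sum into the product $R^{(\psi,\mS')}(x)\cdot R^{(\psi,\mS'')}(x)$. Your explicit check that $\bfT_\vp$ lands in $\bS_3$ and your flag on the boundary phase $\vp=n-3$ are additional care not spelled out in the paper's proof (the paper's Algorithm~\ref{alg:gpb} in fact uses $\psi=n/2-3$ there), but they do not change the argument.
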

\begin{proof}
The proof is in the Appendix~\ref{a:t2}.
\end{proof}
\begin{example}
On one hand, one can straightforwardly compute $\bfT_{\vp}(\hull{010},\hull{110,001})$ for the case of odd $\vp=2\psi+1$.
Values of $\mS'=\hull{010}$ and $\mS''=\hull{110,001}$ mean that, given values of $(x_0^{n/2-1}X^{(n)})_{\overline{\mE'}}$ and $(z_0^{n/2-1}Z^{(n)})_{\overline{\mE''}}$, the receiver knows
\begin{align*}
&f_0=(010)\bigcdot x_{\psi}^{\psi+2}=x_{\psi+1}=u_{\vp+1}+u_{\vp+2}+u_{\vp+3}, \\
&f_1=(110)\bigcdot z_{\psi}^{\psi+2}=z_{\psi}\!+\!z_{\psi+1}=u_{\vp}\!+\!u_{\vp+1}\!+\!u_{\vp+2}\!+\!u_{\vp+3},\\
&f_2=z_{\psi+2}=u_{\vp+3}+u_{\vp+4}.
\end{align*}
Now we must find linear combinations of symbols $f_0^2$, which involve only  symbols $u_{\vp}^{\vp+2}$.
There is only one such non-zero linear combination: $f_0+f_1=u_\vp=(100)\bigcdot u_\vp^{\vp+2}$.
This means that $\bfT_{2\psi+1}(\hull{010},\hull{110,001})=\hull{100}$.

On the other hand, we can compute the same value via Theorem~\ref{t2}:
\ifonecol
\begin{align*}
\set{p'A_{2\psi+1}+p''B_{2\psi+1}}_{p'\in\mS',p''\in\mS''}=
\set{\bnull^5, \underline{01110},\underline{11110},10000,\underline{00001}, 01111,11111,10001}.
\end{align*}
\else
\begin{align*}
&\set{p'A_{2\psi+1}+p''B_{2\psi+1}}_{p'\in\mS',p''\in\mS''}=\\
&\set{\bnull^5, \underline{01110},\underline{11110},10000,\underline{00001}, 01111,11111,10001}.
\end{align*}
\fi
The underlined vectors correspond to $f_0,f_1,f_2$, others are their linear combinations.
From the above set, we choose vectors with last $J_\vp=2$ zero elements.
They are $\set{\bnull^5,10000}$. 
Throwing away the last $2$ zeroes, we obtain $\hull{100}$.
\end{example}
\begin{corollary}
The GPB of CvPK can be computed as shown in Algorithm~\ref{alg:gpb}.
\end{corollary}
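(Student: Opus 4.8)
The plan is to prove the corollary by induction on $m=\log_2 n$, reading Algorithm~\ref{alg:gpb} as: store the GPB of $Q^{(4)}$ once, then repeatedly apply the recursion of Theorem~\ref{t2} to pass from the GPB of $Q^{(n/2)}$ to that of $Q^{(n)}$. For the base case $n=4$, the GPB is the finite table of sixteen polynomial pairs in Table~\ref{t:gpb4}; its correctness is checked directly against Definition~\ref{d:gpb} by enumerating all $2^4=16$ erasure configurations $\mE\subseteq[4]$, computing $\chi_0(\mE)$ and $\chi_1(\mE)$ from Definition~\ref{d:chi} exactly as in the examples preceding Example~\ref{e:gpb}, sorting the $\mE$'s into the sixteen subspaces of $\bF^3$, and reading off the weight enumerators. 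For the inductive step, suppose the algorithm has produced the correct GPB $\bigl\{R^{(\psi,\mS)}\bigr\}$ of $Q^{(n/2)}$; then for every $\vp\in[n-2]$ and $\mS\in\bS_3$, Theorem~\ref{t2} asserts that the GPB entry of $Q^{(n)}$ equals $P^{(\vp,\mS)}(x)=\sum_{(\mS',\mS'')\in\bfT_\vp^{-1}(\mS)}R^{(\psi,\mS')}(x)\,R^{(\psi,\mS'')}(x)$ with $\psi=\max\set{0,\floor{(\vp-1)/2}}$, which is precisely the update Algorithm~\ref{alg:gpb} performs; hence the resulting table again satisfies Definition~\ref{d:gpb}.

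What remains is to argue that each ingredient of this update is effectively computable and that the total cost is polynomial. By \eqref{eq:t1vp0}--\eqref{eq:t1vp3}, the triple $(J_\vp,A_\vp,B_\vp)$ entering \eqref{eq:tdef} is one of only four explicit, $n$-independent choices, selected according to whether $\vp=0$, $\vp=2\psi+1$ with $\psi\le n/2-3$, $\vp=2\psi+2$ with $\psi\le n/2-3$, or $\vp=n-3$; one checks that these four cases partition $[n-2]$. Consequently there are at most four distinct maps $\bfT_\vp\colon\bS_3\times\bS_3\to\bS_3$, and each is tabulated once and for all in $O(1)$ time: iterate over the $16\times 16$ pairs $(\mS',\mS'')$, form the span of $\set{p'A_\vp+p''B_\vp\mid p'\in\mS',\,p''\in\mS''}\subseteq\bF^{3+J_\vp}$, retain the vectors whose final $J_\vp$ coordinates vanish, delete those coordinates, and record the resulting subspace; grouping the pairs by that subspace gives $\bfT_\vp^{-1}$. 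A doubling step then produces $O(n)$ polynomials $P^{(\vp,\mS)}$, each a sum of at most $256$ products of polynomials of degree $\le n/2$, i.e.\ $O(n^3)$ arithmetic operations (or $O(n^2\log n)$ with fast polynomial multiplication), and there are $\log_2 n-2$ doubling steps, so the whole GPB is obtained in time polynomial in $n$.

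I expect the only delicate point to be the boundary phase $\vp=n-3$. For every other $\vp\in[n-2]$ the phase $\psi=\max\set{0,\floor{(\vp-1)/2}}$ satisfies $\psi\le n/2-3$, so $R^{(\psi,\cdot)}$ is a genuine GPB entry of $Q^{(n/2)}$ already available from the previous level; but for $\vp=n-3$ one gets $\psi=n/2-2$, one step past the range for which Definition~\ref{d:chi} (hence Definition~\ref{d:gpb}) was stated, so the recursion \eqref{eq:t1vp3} must be supplied this extra boundary datum, which the algorithm carries along (it is either computed by the same $\bfT$-mechanism with the degenerate two-symbol tail handled explicitly, or simply read off by direct enumeration at the smallest sizes). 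Once this bookkeeping is pinned down, the corollary is an immediate consequence of Theorems~\ref{t1} and~\ref{t2}; the remaining check, that the loop order in Algorithm~\ref{alg:gpb} respects the data dependence so that each $Q^{(n)}$-level uses only already-computed $Q^{(n/2)}$-level data, is routine.
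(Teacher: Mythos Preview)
Your approach is correct and matches the paper's: verify the base table for $Q^{(4)}$ directly, then induct via Theorem~\ref{t2}. The paper's proof is a line-by-line reading of the algorithm (the first loop tabulates the four maps $\bfT_\vp$ as $16\times16$ lookup tables, the base case is loaded from Table~\ref{t:gpb4}, and the main loop with \texttt{Combine} implements \eqref{eq:t2}), but the logical content is identical to yours.

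One correction to your boundary discussion: the algorithm does \emph{not} carry an extra datum at $\psi=n/2-2$. The line $P[\Lambda-3]\gets\texttt{Combine}(R[\Lambda/2-3],T_3)$ uses $\psi=n/2-3$, which is in range for the GPB of $Q^{(n/2)}$. The formula $\psi=\max\set{0,\floor{(\vp-1)/2}}$ in Theorems~\ref{t1}--\ref{t2} is meant for the generic cases \eqref{eq:t1vp1}--\eqref{eq:t1vp2}; for $\vp=n-3$ the intended $\psi$ is $n/2-3$, the same as for the two preceding phases. This is consistent because the first rows of $A_{n-3}=\bfA_{*,\overline{[3]}}$ and $B_{n-3}=\bfB_{*,\overline{[3]}}$ vanish: although knowing $u_0^{n-4}$ lets the receiver compute $x_{n/2-3}$ and $z_{n/2-3}$ outright, the coordinates $p'_0,p''_0$ do not enter $p'A_{n-3}+p''B_{n-3}$, so this extra knowledge does not enlarge the set of recoverable $p$'s beyond what $\chi_{n/2-3}$ already captures. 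No boundary bookkeeping is required, and the proof closes without it.
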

\ifonecol
\linespread{1.5}
\begin{algorithm}
\caption{GPB($m$)}
\label{alg:gpb}

\DontPrintSemicolon
\KwIn{$m\geq 2$}
\KwOut{GPB $P^{(\vp,\mS)}$ for kernel $Q^{(n)}$, $n=2^m$, for all $\vp\in[n-2]$, $\mS\in\bS_3$}
\tcc{first loop: compute mapping $\bfT_\vp$} 
\For{$(i,j)\in [16]\times[16] $}{ \label{l:tinit0}
  $S_{0\dots 3}\gets\emptyset$\\
  \For{$(p_0^2,q_0^2) \in \mT_i \times \mT_j$}{ \label{l:pq0}
    $r_0^5 \gets p_0^2A_0 + q_0^2B_0$ \label{l:a0b0}\\
    $S_3\gets S_3 \cup r_3^5$ \\
    \lIf{$r_5=0$}{$S_2 \gets S_2 \cup \set{r_2^4}$} 
    \lIf{$r_4^5=\bnull$}{$S_1\gets S_1 \cup \set{r_1^3}$}
    \lIf{$r_3^5=\bnull$}{$S_0\gets S_0 \cup \set{r_0^2}$}  \label{l:pq1}
  }
  \For{$k\in[4]$}{
    $T_k[i][j]\gets I^{-1}(S_k)$\label{l:tinit1}
  }
}
$P\gets$ Load GPB of $Q^{(4)}$  from Table~II \label{l:gpb4}\\
\tcc{main loop: compute GPB for $Q^{(2^\lambda)}$}
\For{$\lambda=3 \dots m$}{ \label{l:main0}
  \texttt{swap}$(P,R)$ \label{l:swap}\\
  $\Lambda=2^\lambda$\\
  $P[0]\gets$\texttt{Combine}$(R[0],T_0)$ \label{l:p0}\\
  \For{$\psi = 0 \dots \Lambda/2-3$}{
    $P[2\psi+1]\gets$\texttt{Combine}$(R[\psi],T_1)$ \label{l:p1}\\
    $P[2\psi+2]\gets$\texttt{Combine}$(R[\psi],T_2)$ \label{l:p2}\\
  }
  $P[\Lambda-3]\gets$\texttt{Combine}$(R[\Lambda/2-3],T_3)$ \label{l:p3}\\
  \label{l:main1}
} 
\Return{$P[2^m-3][0..15]$}

\end{algorithm}
\linespread{2.0}
\else
\IncMargin{1em}%
\begin{algorithm}
\caption{\texttt{GPB}($m$)}
\label{alg:gpb}

\end{algorithm}
\DecMargin{1em}%
\fi
\begin{proof}
Let $\mT_0,\mT_1,...,\mT_{15}$ be the subspaces of $\bF^3$, indexed by operator $I:[16]\to\bS_3$, which returns $\mT_i$ by input index $i$ (for example, as given in Table~\ref{t:gpb4}).
The first loop (lines \ref{l:tinit0}--\ref{l:tinit1}) uses \eqref{eq:tdef} to compute tables $T_0,T_1,T_2,T_3:[16]\times[16]\to[16]$, which correspond to $\bfT_0$, $\bfT_{2\psi+1}$, $\bfT_{2\psi+2}$, $\bfT_{n-3}$, respectively, but work with indices $i$ instead of spaces $\mT_i$ themselves.
For example, $T_1[i][j]=l$ in the Algorithm means $\bfT_{2\psi+1}(\mT_i,\mT_j)=\mT_l$ in Theorem~\ref{t2}.

In the first loop, we run over all pairs of subspaces from $\bS_3$.
For each pair of subspaces $(\mT_i,\mT_j)$, in the internal loop (lines \ref{l:pq0}--\ref{l:pq1}) we run over all possible pairs of vectors $p_0^2$ and $q_0^2$ from these subspaces, and compute $r_0^5=pA_0+qB_0$.
In line \ref{l:a0b0} we use matrices $A_0$ and $B_0$, since $A_{2\psi+1}$, $A_{2\psi+2}$, $A_{n-3}$ are submatrices of $A_0$, the same holds for matrices $B_\vp$ (see \eqref{eq:t1vp0}--\eqref{eq:t1vp3}).
We check if the last $J_\vp$ positions of $r_0^5$ are zero.
If so, we choose the appropriate subvector of $r_0^5$, and place it in the corresponding list $S_k$.
The list $S_k$ at the end of the internal loop is equal to $\mT_l=\bfT_\vp(\mT_i,\mT_j)$.
Then, in line~\ref{l:tinit1} we perform the inverse indexing $I^{-1}$ of spaces in $\bS_3$ and obtain $l=T[i][j]$, defined above. 

In line~\ref{l:gpb4} $P$ is initialized with the GPB of kernel $Q^{(4)}$, i.e., the array $P[0..1][0..15]$ of polynomials in $x$. 
Each output value $P[\vp][i]$ is given in Table~\ref{t:gpb4} as $P^{(\vp,\mT_i)}$.

\ifonecol
\linespread{1.5}
\IncMargin{1em}%
\begin{algorithm}
\caption{Combine($R, T$)}
\label{alg:combine}

\DontPrintSemicolon
\KwIn{$R[0..15]$: array of polynomials in $x$. $R[i]=P^{(\psi,\mT_i)}(x)$ for kernel $Q^{(\Lambda/2)}$\\$T[0..15][0..15]$: table with indices corresponding to specific $\bfT_\vp$}
\KwOut{$P[0..15]$: array of polynomials in $x$. $P[i]=P^{(\vp,\mT_i)}(x)$ for kernel $Q^{(n)}$}
$P[0..15]\gets 0$\\
\For{$(i,j)\in [16]\times[16] $}{
    $P[T[i][j]] \gets P[T[i][j]]+R[i]\cdot R[j]$
}
\Return{$P[0..15]$}

\end{algorithm}
\DecMargin{1em}%
\linespread{2.0}
\else
\IncMargin{1em}%
\begin{algorithm}
\caption{\texttt{Combine}($R, T$)}
\label{alg:combine}

\end{algorithm}
\DecMargin{1em}%
\fi
In the main loop (lines~\ref{l:main0}--\ref{l:main1}) the GPB is recursively computed by Theorem~2.
At the beginning of iteration $\lambda$, array $P$ contains the GPB for kernel $Q^{({\Lambda/2})}$, where $\Lambda=2^\lambda$.
In line \ref{l:swap}, we swap $P$ and $R$ (as pointers), so after this line $R$ contains the GPB for $Q^{(\Lambda/2)}$.
Then, we compute GPB of kernel $Q^{(\Lambda)}$ and place it in array $P$.
In lines~\ref{l:p0}, \ref{l:p1}--\ref{l:p3} we use function \texttt{Combine}, defined in Alg.~\ref{alg:combine}, which applies \eqref{eq:t2} with input table $T[0..15][0..15]$ to the input GPB.
\end{proof}

Since the first loop of computing $\bfT_\vp$ in lines~\ref{l:tinit0}--\ref{l:tinit1} has constant complexity, the asymptotic complexity $C_\text{total}$ of Algorithm~\ref{alg:gpb} is  $C_\text{total}=\sum_{\lambda=3}^m C_{\text{main}}(\lambda)$, where $C_\text{main}(\lambda)$ is the complexity of the $\lambda$-th iteration
of the main loop.
The complexity $C_\text{main}(\lambda)$ is $\Lambda=2^\lambda$ times the complexity of function \texttt{Combine}.
The complexity of function \texttt{Combine} depends on current $\lambda$, because the degrees of input polynomials grow approximately as $\Theta(\Lambda)=\Theta(2^\lambda)$, and the polynomial coefficients grow as $\Theta(2^\Lambda)$.
Function \texttt{Combine} consists in $256$ multiplications of such polynomials.
Assume that we multiply these polynomials and their integer coefficients straightforwardly.
Then, polynomial multiplication includes $\Theta(\Lambda^2)$ multiplications of integers. Each integer has length $\Theta(\Lambda)$
and their straightforward multiplication has complexity $\Theta(\Lambda^2)$.
Thus, the complexity of \texttt{Combine} function is asymptotically
$
C_\text{combine}(\lambda)\approx\Lambda^4=16^{\lambda}.
$
The total complexity is
$$
C_\text{total}=\sum_{\lambda=3}^{m}C_\text{main}(\lambda)=\sum_{\lambda=3}^{m}\Theta(2^\lambda\cdot16^\lambda)=\Theta(32^m)=\Theta(n^5).
$$
One can reduce this complexity to $\Theta(n^3\log^2n)$ by using fast algorithms for multiplication of big integers and polynomials.

%%%%%%%%%%%%%%%%%%%%%%%%%%%%%%%%%%%%%%%%%%%%%%%%%%%%%%%%%%
\subsection{Converting GPB to PB \label{s:gpb2pb}}
Polarization behaviour $P^{(\vp)}(x)$ (see Definition~\ref{d:pb}) is the weight spectrum of all erasure configurations $\mG$ that erase $u_\vp$.
This means that linear combination $(1,0,0)\bigcdot u_\vp^{\vp+2}$ must not be recoverable, so $(1,0,0)\notin \chi_\vp(\mG)$. 

More formally, let $\Xi$ be the set of all erasure configurations $\mG$ such that
$(1,\bnull^{n-\vp-1})\notin\cs K_{\overline{[\vp]},\overline{\mG}}$.
Then,
$$
P^{(\vp)}(x)=\sum_{\mG\in\Xi}x^{|\mG|}.
$$
Observe that $\mG\in\Xi\iff(1,\bnull^{n-\vp-1})\notin\cs K_{\overline{[\vp]},\overline{\mG}}\implies (1,\bnull^2)\notin\chi_\vp(\mG)$.
The reverse implication also holds and \mbox{$\mG\in\Xi\iff(1,\bnull^2)\notin\chi_\vp(\mG)$}, which leads
to
\begin{align*}
\Xi=\bigcup_{\mS\in\bS_3:(1,0,0)\notin\mS}\chi^{-1}_\vp(\mS).
\end{align*}
The last two equations imply
\begin{align}
P^{(\vp)}(x)=\sum_{\mS\in\bS_3:(1,0,0)\notin\mS}P^{(\vp, \mS)}(x),
\label{eq:pb2gpb}
\end{align}
where $P^{(\vp,\mS)}(x)$ is the GPB of $K$.
Formula \eqref{eq:pb2gpb} is defined for $\vp\leq n-3$.
Polynomials $P^{(n-2)}(x)$ and $P^{(n-1)}(x)$ can be obtained by
\begin{align}
P^{(n-2)}(x)&=\sum_{\mS\in\bS_3:\forall a\in\bF: (a,1,0)\notin\mS}P^{(\vp, \mS)}(x) \label{eq:pb2gpb1}\\
P^{(n-1)}(x)&=\sum_{\mS\in\bS_3:\forall a\in\bF^2: (a,1)\notin\mS}P^{(\vp, \mS)}(x) \label{eq:pb2gpb2}
\end{align}

Computing each of \eqref{eq:pb2gpb}--\eqref{eq:pb2gpb2} consists of adding respectively $11$, $8$ and $5$ polynomials of degree $n$ with
integer coefficients of length $O(n)$, so the total complexity of converting GPB to PB is $n\cdot O(n^2)=O(n^3)$, which does
not affect the total asymptotic complexity.

%%%%%%%%%%%%%%%%%%%%%%%%%%%%%%%%%%%%%%%%%%%%%%%%%%%%%%%%%%
\subsection{Polarization Rate of CvPK \label{s:secvpk}}
Polarization rate of an $n\times n$ polarizing kernel $K$ can be obtained as \cite{korada2010polar}
\begin{align}
E(K)=\frac{1}{n}\sum_{i=0}^{n-1}\log_nd_i,
\label{eq:rho}
\end{align}
where $d_i$ is called the $i$-th partial distance
and is defined by
\begin{align}
d_i=\min_{u_{i+1}^{n-1}\in\bF^{n-i-1}}\wt\left((1,u_{i+1}^{n-1})K_{\overline{[i]},*}\right).
\label{eq:di}
\end{align}
Observe that $d_i$ is the minimum degree of a non-zero monomial in $P^{(i)}(x)$ (see e.g. \cite{morozov2019distance} for the proof).
So, the values of $d_i$ for $Q^{(n)}$ can be easily obtained from PB of $Q^{(n)}$.

%%%%%%%%%%%%%%%%%%%%%%%%%%%%%%%%%%%%%%%%%%%%%%%%%%%%%%%%%%
\subsection{Row-permuted CvPKs \label{s:perm}}

We observed that one can permute rows of $Q^{(n)}$ and obtain better scaling exponent.
Moreover, we found a permutation that does not affect much neither the kernel processing algorithm, nor the Alg.~\ref{alg:gpb} of computing GPB of a CvPK.
We start with a proposition, which shows how to construct kernel $\tK$ from a given $K$ with improved polarization rate in general.
\begin{proposition}
\label{p:swapd}
Consider $n\times n$ kernel $K$ and $i\in[n]$, for which $d_i\geq d_{i+1}$.
Swap rows $i$ and $i+1$ and denote the resulting kernel by $\tK$.
Then, $E(\tK)\geq E(K)$.
\end{proposition}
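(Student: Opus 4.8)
The plan is to compare the polarization rates $E(\tK)$ and $E(K)$ directly via formula \eqref{eq:rho}, by showing that swapping rows $i$ and $i+1$ of $K$ only affects the two partial distances $d_i$ and $d_{i+1}$, and that their product can only increase. Since $E(K)=\frac1n\sum_j\log_n d_j$ and $\log_n$ is monotone, it suffices to prove that the multiset $\{d_i',d_{i+1}'\}$ of the new partial distances satisfies $d_i'\cdot d_{i+1}' \geq d_i\cdot d_{i+1}$, while $d_j'=d_j$ for all $j\notin\{i,i+1\}$.

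First I would observe that the partial distance $d_j$, as defined in \eqref{eq:di}, depends only on the subspace spanned by rows $j,j+1,\dots,n-1$ of $K$ together with the coset structure: more precisely $d_j = \min\{\wt(v) : v \in \langle K_{j,*},\dots,K_{n-1,*}\rangle,\ v \notin \langle K_{j+1,*},\dots,K_{n-1,*}\rangle\}$, i.e. the minimum weight in $\mathrm{rowspan}(K_{\overline{[j]},*})$ that is not in $\mathrm{rowspan}(K_{\overline{[j+1]},*})$. For $j<i$ the span $\langle K_{j,*},\dots\rangle$ and the span $\langle K_{j+1,*},\dots\rangle$ are each unchanged as sets by swapping rows $i,i+1$ (both contain both rows), so $d_j'=d_j$; similarly for $j>i+1$. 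Thus only $d_i$ and $d_{i+1}$ can change. Write $V_2=\langle K_{i+2,*},\dots,K_{n-1,*}\rangle$, $a=K_{i,*}$, $b=K_{i+1,*}$, $V_1=\langle b\rangle+V_2$, $V_0=\langle a,b\rangle+V_2$. Then $d_{i+1}=\min\{\wt(v):v\in V_1\setminus V_2\}$ and $d_i=\min\{\wt(v):v\in V_0\setminus V_1\}$; after the swap, $d_{i+1}'=\min\{\wt(v):v\in \langle a\rangle+V_2 \setminus V_2\}$ and $d_i'=\min\{\wt(v):v\in V_0\setminus(\langle a\rangle+V_2)\}$, with $V_0$ unchanged.

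Next I would analyze the cosets. The quotient $V_0/V_2$ has four elements: $0$, $\bar a$, $\bar b$, $\bar a+\bar b$ (using that $a,b\notin V_2$ and $a\notin V_1$, so these are distinct). Let $\alpha=\min\{\wt(v):v\in a+V_2\}$, $\beta=\min\{\wt(v):v\in b+V_2\}$, $\gamma=\min\{\wt(v):v\in a+b+V_2\}$. Then $d_{i+1}=\min(\beta,\gamma)$ (minimum over $V_1\setminus V_2 = (b+V_2)\cup(a+b+V_2)$), and $d_i=\alpha$ (the unique coset of $V_1$ in $V_0$ is $a+V_1=(a+V_2)\cup(a+b+V_2)$, and... here I must be careful: $d_i=\min\{\wt(v):v\in a+V_1\}=\min(\alpha,\gamma)$). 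So in fact $d_i=\min(\alpha,\gamma)$ and $d_{i+1}=\min(\beta,\gamma)$; after the swap, $d_{i+1}'=\min(\alpha,\gamma)$ and $d_i'=\min(\beta,\gamma)$. In other words the swap simply exchanges the roles of $\alpha$ and $\beta$: $\{d_i,d_{i+1}\}=\{\min(\alpha,\gamma),\min(\beta,\gamma)\}=\{d_{i+1}',d_i'\}$ as multisets! So actually $E(\tK)=E(K)$ whenever... wait, this would make the hypothesis $d_i\geq d_{i+1}$ irrelevant, which signals I have mis-identified which coset is ``the new $d_i$''.

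Here is where the genuine care is needed, and I expect this coset bookkeeping to be the main obstacle: the partial distance $d_i$ is defined with the specific row $K_{i,*}$ appearing with coefficient $1$, i.e. $d_i=\min_{u}\wt((1,u)K_{\overline{[i]},*})$ ranges over vectors whose expansion uses row $i$ with coefficient exactly $1$ — which is precisely $\min\{\wt(v):v\in a+V_1\}=\min(\alpha,\gamma)$, and $d_{i+1}=\min\{\wt(v):v\in b+V_2\}=\beta$ (\emph{not} $\min(\beta,\gamma)$, since row $i+1$ must appear with coefficient $1$ but row $i$ is absent). After swapping, $\tK_{i,*}=b$ and $\tK_{i+1,*}=a$, so $d_i'=\min\{\wt(v):v\in b+\langle a\rangle+V_2\}=\min(\beta,\gamma)$ and $d_{i+1}'=\alpha$. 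Hence $\{d_i',d_{i+1}'\}=\{\min(\beta,\gamma),\alpha\}$ versus $\{d_i,d_{i+1}\}=\{\min(\alpha,\gamma),\beta\}$. The product comparison then reduces to checking, in every ordering of $\alpha,\beta,\gamma$, that $\alpha\cdot\min(\beta,\gamma)\geq \beta\cdot\min(\alpha,\gamma)$ \emph{under the hypothesis} $\beta=d_{i+1}\leq d_i=\min(\alpha,\gamma)\leq\alpha$. If $\gamma\geq\alpha$: LHS $=\alpha\beta=$ RHS (equality). If $\beta\leq\gamma<\alpha$: LHS$=\alpha\gamma$, RHS$=\beta\gamma$, and $\alpha\geq\beta$ gives LHS$\geq$RHS. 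If $\gamma<\beta\leq\alpha$: LHS$=\alpha\gamma$, RHS$=\beta\gamma$, again LHS$\geq$RHS. In all cases $d_i'd_{i+1}'\geq d_id_{i+1}$, so $\sum_j\log_n d_j'\geq\sum_j\log_n d_j$ and $E(\tK)\geq E(K)$. I would write the proof in exactly this order: reduce to $j\in\{i,i+1\}$, set up the three coset-minimum weights $\alpha,\beta,\gamma$, identify the four partial distances in terms of them, then do the short case check using $d_i\geq d_{i+1}$.
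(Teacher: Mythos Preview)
Your approach is essentially the paper's: your coset minimum weights $\alpha,\beta,\gamma$ are exactly the paper's $\uwt(\mA),\uwt(\mB),\uwt(\mC)$, and both arguments identify $d_i=\min(\alpha,\gamma)$, $d_{i+1}=\beta$, $\td_i=\min(\beta,\gamma)$, $\td_{i+1}=\alpha$. The paper's endgame is a bit cleaner: the hypothesis $\beta=d_{i+1}\le d_i=\min(\alpha,\gamma)$ already forces $\beta\le\gamma$, so $\td_i=\beta=d_{i+1}$ exactly and $\td_{i+1}=\alpha\ge d_i$, which gives $E(\tK)\ge E(K)$ without any product case-split. In your case analysis, case~2 has a slip (since $\beta\le\gamma$ the LHS is $\alpha\beta$, not $\alpha\gamma$; the needed inequality is $\alpha\ge\gamma$, which does hold there) and case~3 is vacuous under the hypothesis, but the conclusion is unaffected.
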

\begin{proof}
Denote disjoint sets
\begin{align*}
\mA=\set{c_0^{n-1}\!=\!(1,0,u_{i+2}^{n-1})K_{\overline{[i]},*}\;\big|\;u_{i+2}^{n-1}\in\bF^{n-i-2}}\\
\mB=\set{c_0^{n-1}\!=\!(0,1,u_{i+2}^{n-1})K_{\overline{[i]},*}\;\big|\;u_{i+2}^{n-1}\in\bF^{n-i-2}}\\
\mC=\set{c_0^{n-1}\!=\!(1,1,u_{i+2}^{n-1})K_{\overline{[i]},*}\;\big|\;u_{i+2}^{n-1}\in\bF^{n-i-2}}
\end{align*}

For set of vectors $S$, denote by $\uwt(S)$ the minimum weight of vector from $S$.
Observe that 
\begin{align*}
d_i&=\uwt(\mA\cup\mC)=\min\set{\uwt(\mA), \uwt(\mC)}\\
d_{i+1}&=\uwt(\mB)\leq d_i \implies \uwt(\mB)\leq\uwt(\mC)\\
\td_i&=\uwt(\mB\cup\mC)=\uwt(\mB)=d_{i+1}\\
\td_{i+1}&=\uwt(\mA)\geq\uwt(\mA\cup\mC)=d_i,
\end{align*}
where $\td_0^{n-1}$ are the partial distances of $\tK$.
Thus, $\td_i=d_{i+1}$, $\td_{i+1}\geq d_{i}$.
Obviously, $\td_j=d_j$ for $j\notin\set{i,i+1}$.
Recalling \eqref{eq:rho}, one obtains $E(\tK)\geq E(K)$.
\end{proof}
We can apply the proposition multiple times and obtain bubble sorting of rows by their partial distances.
\begin{corollary}
\label{c:ok}
Denote by $\overline K$ kernel with rows of $K$, sorted by $d_i$ in ascending order. Then, $E(\overline K)\geq E(K)$.
\end{corollary}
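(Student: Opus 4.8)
The plan is to reduce Corollary~\ref{c:ok} to Proposition~\ref{p:swapd} by a bubble-sort argument. First I would recall that a permutation of the rows of $K$ permutes the partial distances $d_0,\dots,d_{n-1}$ only in a ``bookkeeping'' sense: $d_i$ as defined in \eqref{eq:di} depends on the row index $i$ through the set $\overline{[i]}$ of rows with indices $\geq i$, so swapping two \emph{adjacent} rows $i$ and $i+1$ leaves all sets $\overline{[j]}$, $j\notin\set{i,i+1}$, unchanged as subsets of rows (only the labels $i,i+1$ are exchanged), and hence leaves $d_j=\td_j$ for $j\notin\set{i,i+1}$, while $\td_i,\td_{i+1}$ are governed by Proposition~\ref{p:swapd}. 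This observation — already used inside the proof of the Proposition — is what makes adjacent transpositions composable.

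Next I would run bubble sort on the sequence $(d_0,\dots,d_{n-1})$. Since any permutation can be realized as a product of adjacent transpositions, and bubble sort in particular only ever swaps an adjacent pair $(i,i+1)$ when the current values satisfy $d_i>d_{i+1}$ (strictly), each such swap meets the hypothesis $d_i\geq d_{i+1}$ of Proposition~\ref{p:swapd}. Applying the Proposition to that swap yields a new kernel $K'$ with $E(K')\geq E(K)$, the multiset of partial distances unchanged, and the pair now in sorted order. Iterating, after finitely many swaps we reach a kernel $\overline K$ whose partial distance sequence is nondecreasing, i.e.\ $\overline K$ has the rows of $K$ sorted by $d_i$ in ascending order, and $E(\overline K)\geq\dots\geq E(K')\geq E(K)$ by transitivity. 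If one wants to be careful about ties (bubble sort never swaps equal adjacent values, so the final order among rows with equal $d_i$ is their original relative order), this does not affect the claim, since $E$ depends only on the multiset $\set{d_i}$ via \eqref{eq:rho}.

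The one point that deserves a sentence of justification — and the closest thing to an ``obstacle,'' though it is minor — is that Proposition~\ref{p:swapd} is stated as a single swap in a \emph{fixed} kernel, whereas I apply it repeatedly to the evolving kernel produced by previous swaps. This is legitimate because the Proposition's hypothesis and conclusion are phrased purely in terms of the current kernel's partial distances and the resulting kernel's partial distances; at each step I simply verify that the current $d_i\geq d_{i+1}$ (guaranteed by the bubble-sort comparison) and invoke the Proposition with $K$ replaced by the current iterate. Composing the inequalities $E(\cdot)\geq E(\cdot)$ across the chain of kernels gives the result. No new estimate is needed; the whole content is the reduction to a sorting network built from adjacent transpositions, each of which is an instance of the Proposition.
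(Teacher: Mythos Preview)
Your proposal is correct and follows exactly the same approach as the paper: the paper's proof is a single sentence preceding the corollary, ``We can apply the proposition multiple times and obtain bubble sorting of rows by their partial distances,'' and you have simply fleshed out that sentence with the (straightforward) details about why adjacent swaps compose and why bubble sort only invokes the hypothesis $d_i\geq d_{i+1}$.
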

\begin{corollary}
\label{c:swapp}
Let $d_i=d_{i+1}=w$ and $P^{(i)}_w<P^{(i+1)}_w$, where $d_*$ and $P^{(*)}$ are partial distances and PB of kernel $K$, respectively.
Swap rows $i$ and $i+1$ and denote the resulting kernel by $\tK$.
Then, $\tP^{(i+1)}_w\leq P^{(i)}_w<P_w^{(i+1)}\leq \tP^{(i)}_w$, where $\tP^{(*)}$ is the PB of $\tK$.
\end{corollary}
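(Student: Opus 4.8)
The plan is to recast the ``$u_\vp$ is erased'' condition of Definition~\ref{d:pb} as a condition on linear spans of the rows of the kernel, and then to note that swapping rows $i$ and $i+1$ affects these span conditions only at phases $i$ and $i+1$, and only by toggling a single generator in each span; two trivial inclusions of subspaces then yield the two outer inequalities, while the middle one is the hypothesis.

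First I would record a reformulation of Definition~\ref{d:pb}. Fix $\mE\subseteq[n]$, put $\ove=[n]\setminus\mE$, and for a kernel $M$ write $M_{j,\ove}$ for row $j$ of $M$ restricted to the columns in $\ove$. Since over $\bF$ the column space of a matrix is the orthogonal complement of its left null space, the condition $(1,\bnull^{n-\vp-1})\in\cs M_{\overline{[\vp]},\ove}$ fails precisely when there is a linear relation among $\set{M_{j,\ove}:j\ge\vp}$ having nonzero coefficient on $M_{\vp,\ove}$, i.e.\ precisely when $M_{\vp,\ove}\in\langle M_{j,\ove}:j>\vp\rangle$. Consequently the coefficient $P^{(\vp)}_w$ of $x^w$ in the PB polynomial of $M$ equals the number of $\mE$ with $|\mE|=w$ for which the $\vp$-th row of $M$, restricted to $\ove$, lies in the span of the later rows (restricted to $\ove$).

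Next I would introduce, for each $\mE$, the vectors $a=K_{i,\ove}$, $b=K_{i+1,\ove}$ and the subspace $R=\langle K_{j,\ove}:j\ge i+2\rangle$. Since $\tK$ coincides with $K$ outside rows $i,i+1$ and satisfies $\tK_{i,*}=K_{i+1,*}$, $\tK_{i+1,*}=K_{i,*}$, the triple $(a,b,R)$ is the same whether read off $K$ or $\tK$. Applying the reformulation to phases $i$ and $i+1$ for both kernels then gives, for every $w$,
\begin{align*}
P^{(i)}_w&=\#\set{\mE:|\mE|=w,\ a\in\langle b\rangle+R}, & P^{(i+1)}_w&=\#\set{\mE:|\mE|=w,\ b\in R},\\
\tP^{(i)}_w&=\#\set{\mE:|\mE|=w,\ b\in\langle a\rangle+R}, & \tP^{(i+1)}_w&=\#\set{\mE:|\mE|=w,\ a\in R}.
\end{align*}
Because $R\subseteq\langle b\rangle+R$, every $\mE$ counted by $\tP^{(i+1)}_w$ is also counted by $P^{(i)}_w$, so $\tP^{(i+1)}_w\le P^{(i)}_w$; symmetrically $R\subseteq\langle a\rangle+R$ gives $P^{(i+1)}_w\le\tP^{(i)}_w$. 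Both bounds hold for all $w$; taking $w=d_i=d_{i+1}$ and combining with the hypothesis $P^{(i)}_w<P^{(i+1)}_w$ produces the chain $\tP^{(i+1)}_w\le P^{(i)}_w<P^{(i+1)}_w\le\tP^{(i)}_w$.

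The only genuinely non-routine step is the reformulation in the second paragraph; after it, the argument is bookkeeping about which rows are relabeled by the swap together with the one-line inclusions $R\subseteq\langle a\rangle+R$ and $R\subseteq\langle b\rangle+R$. I would verify the reformulation directly: $(1,\bnull^{n-\vp-1})\in\cs M_{\overline{[\vp]},\ove}$ says some $\bF$-combination of the unerased columns of $M_{\overline{[\vp]},*}$ equals the first standard basis vector, and its contrapositive, obtained through the left null space, is exactly the stated span condition; this is the only place the rank/duality over $\bF$ enters.
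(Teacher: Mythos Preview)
Your proof is correct. The row--span reformulation of Definition~\ref{d:pb} is valid (column space equals the orthogonal complement of the left null space for the standard bilinear form over any field, in particular $\bF$), and once you have it the two outer inequalities are exactly the inclusions $R\subseteq\langle a\rangle+R$ and $R\subseteq\langle b\rangle+R$ you wrote.

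The paper argues differently. It reuses the codeword sets $\mA,\mB,\mC$ from Proposition~\ref{p:swapd} and identifies, at the specific weight $w=d_i=d_{i+1}$, the PB coefficients with counts of minimum-weight codewords: $P^{(i)}_w=|\mA_w|+|\mC_w|$, $P^{(i+1)}_w=|\mB_w|$, $\tP^{(i)}_w=|\mB_w|+|\mC_w|$, $\tP^{(i+1)}_w=|\mA_w|$; the inequalities then drop out from $|\mC_w|\ge 0$. This identification uses (implicitly) that over $\bF$ a weight-$w$ erasure pattern erases $u_i$ iff it equals the support of some weight-$w$ codeword in the relevant coset, and that distinct binary vectors have distinct supports --- both facts that hold only because $w$ is the minimum weight. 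Your approach avoids this step entirely: the span argument works uniformly in $w$, so you actually prove the stronger fact that $\tP^{(i+1)}_w\le P^{(i)}_w$ and $P^{(i+1)}_w\le\tP^{(i)}_w$ for \emph{every} $w$, with the hypothesis $d_i=d_{i+1}=w$ used only to import the strict middle inequality. The paper's route, on the other hand, makes the link to Proposition~\ref{p:swapd} and to the codeword picture more explicit.
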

\begin{proof}
For set of vectors $S$, denote by $S_w$ the set of all vectors from $S$ with weight $w$.
Then, $P^{(i)}_w=|\mA_w|+|\mC_w|$, $P^{(i+1)}_w=|\mB_w|$ and
\begin{align*}
\tP^{(i)}_w&=|\mB_w|+|\mC_w|\geq P_w^{(i+1)}\\
\tP^{(i+1)}_w&=|\mA_w|\leq P_w^{(i)}.
\end{align*}
Thus, $\tP^{(i+1)}_w\leq P^{(i)}_w<P_w^{(i+1)}\leq \tP^{(i)}_w$.
\end{proof}
\begin{remark}
\label{r:betterse}
Intuitively, in the pair of subchannels $W^{(i)}$ and $W^{(i+1)}$, induced by the kernel from Corollary~\ref{c:swapp}, the ``bad'' one becomes ``worse'' and the ``good''
one becomes ``better'' by swapping the rows.
Intuition suggests that this leads to $\mu(\tK)\leq\mu(K)$.
Also, by Proposition~\ref{p:swapd}, $E(\tK)\geq E(K)$.
\end{remark}
\begin{remark}
\label{r:tq}
We observed that for CvPK $d_{2i}\geq d_{2i+1}$ for $i=2..n/2-3$.
Denote by $\tQ^{(n)}$ kernel $Q^{(n)}$ with swapped $2i$-th and $(2i+1)$-th rows for $i=2..n/2-3$.
One can easily obtain PB $\tP^{(\vp)}(x)$ of kernel $\tQ^{(n)}$  from GPB $P^{(\vp,\mS)}(x)$ of kernel $Q^{(n)}$ by similar to \eqref{eq:pb2gpb}--\eqref{eq:pb2gpb2} formulae:
\begin{align}
\tP^{(\vp)}(x)&=P^{(\vp)}(x), \text{ for }\vp\leq 3 \text{ or } \vp\geq n-4, \label{eq:tp0}\\
\tP^{(2i)}(x)&=\sum_{\mS\in\bS_3: \forall a\in\bF: (0,1,0)\notin\mS} P^{(2i, \mS)}(x), \label{eq:tp1}\\
\tP^{(2i+1)}(x)&=\sum_{\mS\in\bS_3: \forall a\in\bF: (1,a,0)\notin\mS}P^{(2i, \mS)}(x). \label{eq:tp2}
\end{align}
\end{remark}

Also, SC decoding for $\tQ^{(n)}$ is very similar to SC decoding for $Q^{(n)}$, as described in Appendix~\ref{s:tq}.
%%%%%%%%%%%%%%%%%%%%%%%%%%%%%%%%%%%%%%%%%%%%%%%%%%%%%%%%%%
\section{Numerical Results \label{s:nr}}
\subsection{Scaling Exponent and Polarization Rate}
\ifonecol
\begin{table}
\parbox{.6\textwidth}{
\centering 
\caption{Polarization rate $E$ and scaling exponent $\mu$ of CvPK of size $n$.}
\label{t:se}               
\begin{tabular}{|c|c|c|c|c|c|c|}
\hline
$n$     & $\!E(Q^{(n)})\!$ & $\!\underline E(B^{(n)})\!$ & $\!\mu(Q^{(n)})\!$ & $\!\mu(\tQ^{(n)})\!$ & $\!\mu(\overline Q^{(n)})\!$ & best $\!\mu\!$   \\ \hline
4       & 0.5              &  0.5                         & 3.627              & 3.627                & 3.627                        & 3.627           \\ \hline
8       & 0.5              &  0.5                         & 3.577              & 3.577                & 3.577                        & 3.577           \\ \hline
16      & 0.50914          &  0.51828                     & 3.470              & 3.409                & 3.400                        & 3.346           \\ \hline
32      & 0.52194          &  0.53656                     & 3.382              & 3.316                & 3.153                        & 3.122           \\ \hline
64      & 0.52923          &  0.56427                     & 3.333              & 3.283                &                              & 2.87            \\ \hline
128     & 0.53482          &  0.58775                     & 3.310              & \textbf{3.277}       &                              &                 \\ \hline
256     & 0.53865          &  0.61333                     & \textbf{3.303}     & \textbf{3.283}       &                              &                 \\ \hline
512     & 0.54106          &  0.63559                     & \textbf{3.308}     & \textbf{3.296}       &                              &                 \\ \hline
1024    & 0.54260          &  0.65688                     & \textbf{3.317}     & \textbf{3.311}       &                              &                 \\ \hline
\end{tabular}          
}    
\parbox{.3\textwidth}{
\centering
\caption{Polarization rate $E$ of large CvPK of size $n$.}
\label{t:pr}
\begin{tabular}{|c|c|c|}
\hline
$n$     & $E(Q^{(n)})$       & $\!\underline E(B^{(n)})\!$   \\ \hline        
2048    & 0.54351            & 0.67558                       \\ \hline
4096    & 0.54398            & 0.69274                       \\ \hline  
8192    & \textbf{0.54414}   & 0.70802                       \\ \hline  
16384   & \textbf{0.54408}   & 0.72187                       \\ \hline  
32768   & \textbf{0.54386}   & 0.73432                       \\ \hline  
65536   & \textbf{0.54353}   & 0.74564                       \\ \hline  
\end{tabular}
}
\end{table}
\else
\begin{table}
\centering 
\caption{Polarization rate $E$ and scaling exponent $\mu$ of convolutional polar kernels of size $n$. Best $\mu$ corresponds to a known kernel with the lowest scaling exponent
from \cite{yao2019explicit}.}
\label{t:se}               
\begin{tabular}{|c|c|c|c|c|c|c|}
\hline
$n$     & $\!E(Q^{(n)})\!$ & $\!\underline E(B^{(n)})\!$ & $\!\mu(Q^{(n)})\!$ & $\!\mu(\tQ^{(n)})\!$ & $\!\mu(\overline Q^{(n)})\!$ & best $\!\mu\!$   \\ \hline
4       & 0.5              &  0.5                         & 3.627              & 3.627                & 3.627                        & 3.627           \\ \hline
8       & 0.5              &  0.5                         & 3.577              & 3.577                & 3.577                        & 3.577           \\ \hline
16      & 0.50914          &  0.51828                     & 3.470              & 3.409                & 3.400                        & 3.346           \\ \hline
32      & 0.52194          &  0.53656                     & 3.382              & 3.316                & 3.153                        & 3.122           \\ \hline
64      & 0.52923          &  0.56427                     & 3.333              & 3.283                &                              & 2.87            \\ \hline
128     & 0.53482          &  0.58775                     & 3.310              & \textbf{3.277}       &                              &                 \\ \hline
256     & 0.53865          &  0.61333                     & \textbf{3.303}     & \textbf{3.283}       &                              &                 \\ \hline
512     & 0.54106          &  0.63559                     & \textbf{3.308}     & \textbf{3.296}       &                              &                 \\ \hline
1024    & 0.54260          &  0.65688                     & \textbf{3.317}     & \textbf{3.311}       &                              &                 \\ \hline
\end{tabular}              
\end{table}               
\begin{table}
\centering
\caption{Polarization rate $E$ of large CvPK of size $n$.}
\label{t:pr}
\begin{tabular}{|c|c|c|}
\hline
$n$     & $E(Q^{(n)})$       & $\!\underline E(B^{(n)})\!$   \\ \hline        
2048    & 0.54351            & 0.67558                       \\ \hline
4096    & 0.54398            & 0.69274                       \\ \hline  
8192    & \textbf{0.54414}   & 0.70802                       \\ \hline  
16384   & \textbf{0.54408}   & 0.72187                       \\ \hline  
32768   & \textbf{0.54386}   & 0.73432                       \\ \hline  
65536   & \textbf{0.54353}   & 0.74564                       \\ \hline  
\end{tabular}
\end{table}
\fi

In Table~\ref{t:se} one can see the computed values of scaling exponent for BEC and polarization rate for kernels $Q^{(n)}$ and $\tQ^{(n)}$.
Since PB for these kernels can be obtained by polynomial algorithm, we obtain scaling exponent for these kernels for $n$ up to $1024$.
Remark~\ref{r:betterse} suggests $\mu(\tQ^{(n)})\leq\mu(Q^{(n)})$.
Although we do not prove this inequality, one can see in Table~\ref{t:se} that it indeed holds for all $n\leq 1024$, becoming strict for $n\geq 16$.

We also provide scaling exponent for kernel $\overline Q^{(n)}$, consisting of rows of $Q^{(n)}$, sorted by partial distances, as described in Corollary~\ref{c:ok}.
Also some adjacent rows were sorted by $P^{(i)}_w$ as described in Corollary~\ref{c:swapp}.
The specific row permutations $\pi_{16}$ and $\pi_{32}$, corresponding to $\overline Q^{(16)}_{i,*}=Q^{(16)}_{\pi_{16}(i),*}$ and $\overline Q^{(32)}_{i,*}=Q^{(32)}_{\pi_{32}(i),*}$, are
\small
\ifonecol
\begin{align*}
&\pi_{16}=(0, 1, 2, 3, 5, 4, 7, 6, 10,8, 11,9, 12,13,14,15),\\
&\pi_{32}=(0,1,2,3,6,4,9,7,13,5,20,8,14,11,18,15,16,10,23,19,24,12,26,17,25,21,27,22,28,29,30,31).
\end{align*}
\else
\begin{align*}
\pi_{16}=(&0, 1, 2, 3, 5, 4, 7, 6, 10,8, 11,9, 12,13,14,15),\\
\pi_{32}=(&0,1,2,3,6,4,9,7,13,5,20,8,14,11,18,15,\\&16,10,23,19,24,12,26,17,25,21,27,22,28,29,30,31).
\end{align*}
\fi
\normalsize
One can see that, unlike the case of kernel $\tQ^{(n)}$, the rows order in $\overline Q^{(n)}$ is very different from the original order in $Q^{(n)}$.
We found no formulae  to obtain PB of $\overline Q^{(n)}$ from the GPB of $Q^{(n)}$, similar to \eqref{eq:tp0}--\eqref{eq:tp2}.
We obtain PB of $\overline Q^{(n)}$ for $n\leq 32$ by brute force.
One can see that the proposed row permutation leads to smaller scaling exponent, comparable to the best known \cite{yao2019explicit}.
For all studied cases, $E(Q^{(n)})=E(\tQ^{(n)})=E(\overline Q^{(n)})$.
In Table~\ref{t:pr} one can see polarization rate of large CvPKs, obtained by a simplified procedure \cite{morozov2019simplified}.
We also provide a lower bound $\underline E(B^{(n)})$ of polarization rate of BCH kernels $B^{(n)}$, where partial distances are lower-bounded by constructive distances of extended BCH codes, generated by the bottom rows of $B^{(n)}$.

What is counter-intuitive is that $\mu(Q^{(512)})>\mu(Q^{(256)})$, $\mu(Q^{(256)})>\mu(Q^{(128)})$ and $E(Q^{(16384)})<E(Q^{(8192)})$.
Intuitively, for the kernels which have the same structure, the larger is the kernel, the better polarization properties it has.
Although results for scaling exponent may be imprecise due to numerical errors, computing polarization rate is simple and numerically stable.
On the other hand, if the scaling exponent of $Q^{(n)}$ tended to $2$ with $n\to\infty$, that would mean existence of codes of lengths $N=n^M$, which achieve optimal scaling exponent with decoding complexity $O(N\log N)$. 
This sounds too good to be true.

Polarization rate in \cite{ferris2013branching} was heuristically estimated to be around 0.62, although no rigorous proof of channel polarization was provided.
In our scenario, channel polarization follows from the general proof for the case of large kernels, obtained in \cite{korada2010polar}, and we obtain a precise estimate of the polarization rate.

\subsection{Performance of Polar Codes with CvPK}
\ifonecol
\begin{figure}
\begin{subfigure}{0.5\textwidth}
\includegraphics[width=\textwidth]{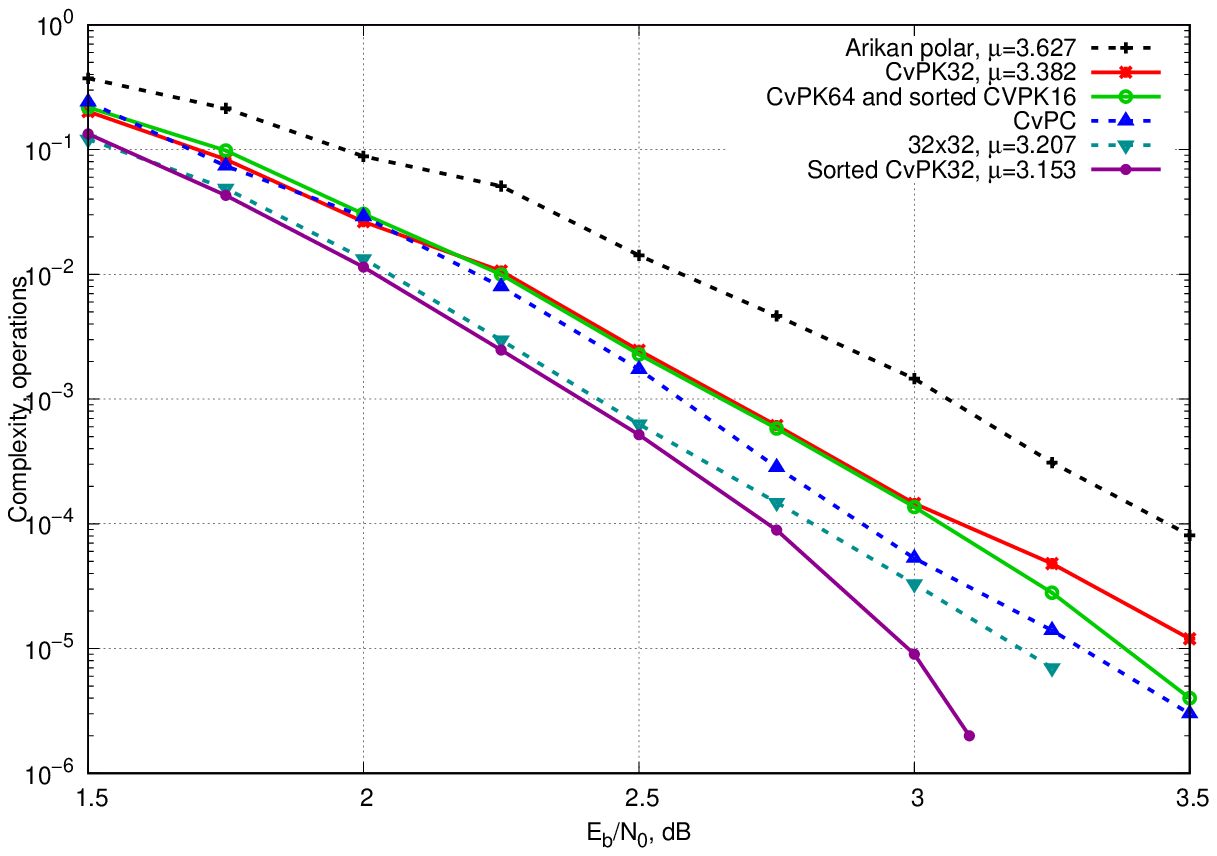}
\caption{Performance of $(1024,512)$ polar codes with various CvPKs (solid) and other kernels (dashed) under SC decoding.}
\label{fig:cvpk1k}
\end{subfigure}
\begin{subfigure}{0.5\textwidth}
\includegraphics[width=\textwidth]{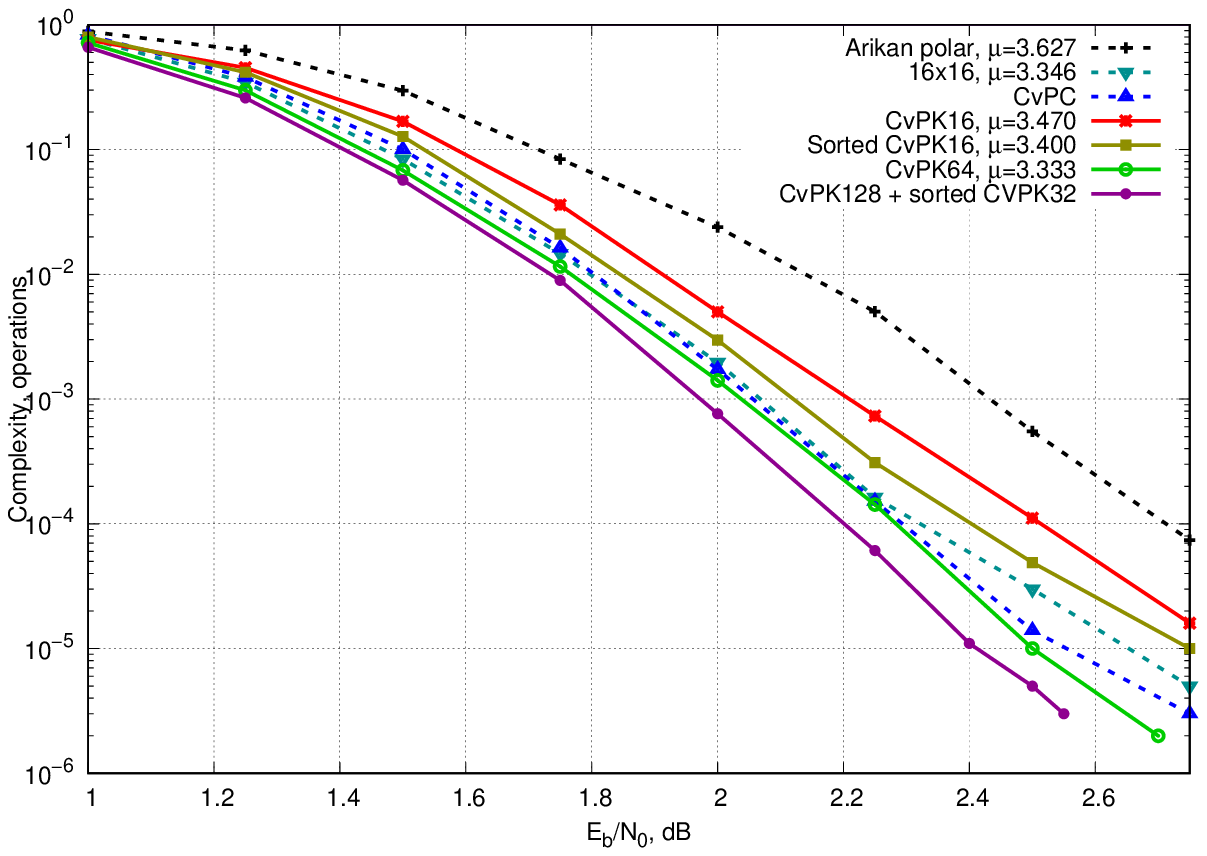}
\caption{Performance of $(4096,2048)$ polar codes with various CvPKs (solid) and other kernels (dashed) under SC decoding.}
\label{fig:cvpk4k}
\end{subfigure}
\caption{Performance of polar codes with CvPKs under SC decoding}
\label{fig:scres} 
\end{figure}
\else
\begin{figure}
\includegraphics[width=0.5\textwidth]{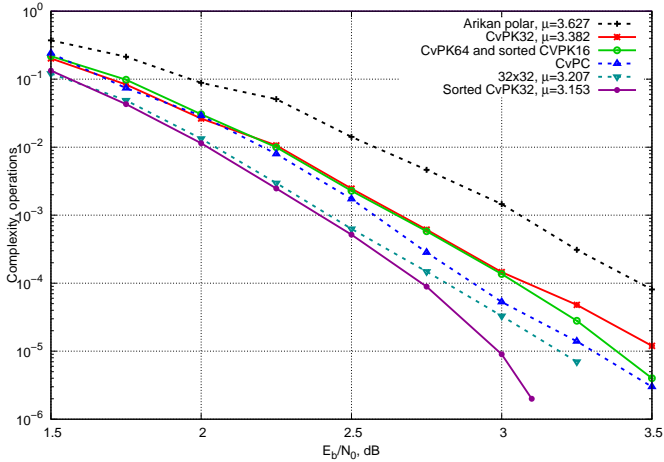}
\caption{Performance of $(1024,512)$ polar codes with various CvPKs (solid) and other kernels (dashed) under SC decoding.}
\label{fig:cvpk1k}
\end{figure}
\begin{figure}
\includegraphics[width=0.5\textwidth]{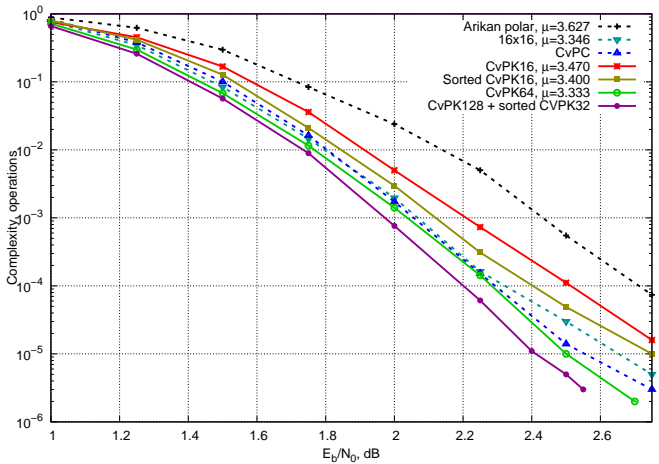}
\caption{Performance of $(4096,2048)$ polar codes with various CvPKs (solid) and other kernels (dashed) under SC decoding.}
\label{fig:cvpk4k}
\end{figure} 
\fi
Fig.~\ref{fig:cvpk1k} presents the SC decoding performance of $(1024,512)$ codes, corresponding to polarizing transformations $F^{\otimes 10}$, $Q^{(32)\otimes 2}$, $Q^{(64)} \otimes \overline Q^{(16)}$, $Q^{(1024)}$, $K_{3}^{\otimes 2}$, $\overline Q^{(32)\otimes 2}$, the order is the same as in the legend.
Kernel $K_{3}$ is from \cite{trofimiuk2018efficient}, $\mu(K_{3})=3.207$ and $E(K_{3})=0.52925$.
The design SNR is $E_b/N_0=2.75$ dB.
One can see that polar code with sorted $32\times 32$ CvPK $\overline Q^{(32)}$ outperforms polar codes with other kernels and the CvPC due to its lower scaling exponent, even though it does not have the highest polarization rate.
The polarizing transformation $Q^{(64)} \otimes \overline Q^{(16)}$ corresponds to a polar code with mixed kernels.
The definition of polar codes with mixed kernels can be obtained by replacing $K^{\otimes M}$ in \eqref{eq:pcdef} with $K_1\otimes ...\otimes K_M$.

Fig.~\ref{fig:cvpk4k} presents the SC decoding performance of $(4096,2048)$ codes with polarizing transformations $F^{\otimes 12}$, $K_{2}^{\otimes 3}$, $Q^{(4096)}$ (dashed),
and $Q^{(16)\otimes 3}$, $\overline Q^{(16)\otimes 3}$, $Q^{(64)\otimes 2}$, $Q^{(128)}\otimes \overline Q^{(32)}$ (solid).
Kernel $K_{2}$ is from \cite{trofimiuk2019reduced}, $\mu(K_{2})=3.346$ and $E(K_{2})=0.51828$.
The design SNR is $E_b/N_0=2.25$~dB.
One can see that polar code with $Q^{(128)}\otimes \overline Q^{(32)}$ has the best performance.

Polar codes with Arikan kernel were constructed using Gaussian approximation \cite{trifonov2012efficient}, other codes were constructed using Monte-Karlo simulations.
For kernels $K_2$ and $K_3$ kernel processing is defined in \cite{trofimiuk2018efficient, trofimiuk2019reduced}.
Efficient processing of $\overline Q^{(n)}$ is done by the general trellis-based algorithm \cite{trifonov2019trellis}.

For $Q^{(n)}$ the kernel processor is  the SC decoder from \cite{morozov2018efficient}.
Note that for CvPK $Q^{(n)}$ the complexity of kernel processing is $O(n\log n)$, in contrast with an arbitrary kernel of size $n$, where, in general, the complexity is $O(2^n)$.
Observe also that processing of kernel $\widetilde Q^{(n)}$ can be also done by the SC decoder for CvPC  with swapping adjacent phases on layer $m$.

The complexity of SC decoding for $(1024,512)$ codes from Fig.~\ref{fig:cvpk1k} is presented in Table~\ref{t:compl}, together with the SC decoding frame error probability
(FER) at $E_b/N_0=3$ dB.
Note that the decoding complexity increases monotonously with the decrease of error probability.
This approves the fact that CvPKs are competitive compared to other polarization kernels. 
Regarding distance properties of the obtained $(1024,512)$ polar codes, all codes have the same minimum distance of $16$,
so we also present the error coefficient, i.e., the number of codewords of weight $16$.
One can see non-monotonous dependence of FER on the error coefficient, since SC decoding is not near-ML decoding.
\begin{table}
\footnotesize
\caption{ SC decoding complexity of $(1024,512)$ polar codes, and an approximate number of minimum-weight codewords, found by \cite{canteaut98new}. In all cases $d=16$.}
\label{t:compl}
\centering
\begin{tabular}[width=\textwidth]{|c|c|c|c|c|}
\hline
Polar. transform & Compl. & FER at 3 dB & Err. coeff. &Decoder 
\\ \hline
 ${\begin{pmatrix}10\\11\end{pmatrix}}^{\otimes 10}$  & $1.4\cdot 10^4$ & $1.6\cdot 10^{-3}$ &49344&  \cite{arikan2009channel}
\\ \hline
$Q^{(32)}\otimes Q^{(32)}$ & $6.6\cdot 10^4$& $1.5\cdot 10^{-4}$ & 19648 & \cite{morozov2020efficient}
\\ \hline
 $Q^{(64)}\otimes \overline Q^{(16)}$ & $8.4\cdot 10^4$ &$1.4\cdot 10^{-4}$ & 18624 & \cite{morozov2020efficient,trifonov2019trellis}
\\ \hline
$Q^{(1024)}$  & $2.4\cdot 10^5$ & $5.3\cdot 10^{-5}$ & 2240 & \cite{morozov2020efficient}
\\ \hline
$K_3\otimes K_3$ & $4.4\cdot 10^5$ & $3.3\cdot 10^{-5}$ & 1984 & \cite{trofimiuk2018efficient}
\\ \hline
 $\overline Q^{(32)}\otimes \overline Q^{(32)}$ &$1.1\cdot 10^6$ & $9.0\cdot10^{-6}$ & 4288 &  \cite{trifonov2019trellis}
\\\hline
\end{tabular}
\end{table}
%%%%%%%%%%%%%%%%%%%%%%%%%%%%%%%%%%%%%%%%%%%%%%%%%%%%%%%%%%%%%%%%%%%%%%%%
\section{Conclusions}
In this paper, a family of convolutional polar kernels (CvPKs) of size $n=2^m$ was proposed together with the polynomial-complexity algorithm for computing polarization behaviour, scaling exponent and polarization rate. The kernels are based on convolutional polar codes. The proposed algorithm enables one to study
polarization properties of CvPKs of size up to $1024\times 1024$.
Polarization properties of convolutional polar kernels are getting worse, starting from sufficiently large size. 
The row permutation operation was suggested, that can improve scaling exponent of CvPK.
The proposed family of kernels allow kernel processing with complexity $O(n \log n)$ as the kernel size  $n$ tends to infinity.

\appendices
\section{Proof of Theorem 1 \label{a:t1}}
Let us prove the theorem for the case of $\vp=2\psi+1$, corresponding to \eqref{eq:t1vp1}.
If the receiver knows $u_0^{2\psi}$, then it knows $x_0^{\psi-1}$ and $z_0^{\psi-1}$ by \eqref{eq:xz}.
Denote the stripped kernels without rows, corresponding to known (already estimated) symbols, and without columns, corresponding to erased symbols, by
$\hat Q=Q^{(n)}_{\overline{[\vp]},\overline{\mE}}, \;\hat Q'=Q^{(n/2)}_{\overline{[\psi]},\overline{\mE'}}, \;\hat Q''=Q^{(n/2)}_{\overline{[\psi]},\overline{\mE''}}$.
Denote $k=n-\vp$, $k'=\frac{n}{2}-\psi$.
and $\overline w=n-|\mE|$, $\overline w'=n/2-|\mE'|$, $\overline w''=n/2-|\mE''|$.
Then, the size of $\hat Q$ is $k\times\overline w$, the sizes of $\hat Q'$ and $\hat Q''$ are $k'\times \overline w'$ and $k'\times \overline w''$.

Denote the transition matrices $X^{(n)}$ and $Z^{(n)}$ without rows and columns, corresponding to known symbols, by
$
\hat X=X^{(n)}_{\overline{[\vp]},\overline{[\psi]}},\; \hat Z=Z^{(n)}_{\overline{[\vp]},\overline{[\psi]}}.
$
The sizes of $\hat X$ and $\hat Z$ are $k\times k'$.
Using above notations,  one obtains 
$
\hat Q=(\hat X\hat Q',\hat Z \hat Q'').
$

The theorem for the case of \eqref{eq:t1vp1}  now can be reformulated as
$(p_0^2,\bnull^{k-3})\in\cs\hat Q$, if and only if there exists $(p',\bnull^{k'-3})\in\cs\hat Q'$, $(p'',\bnull^{k'-3})\in\cs\hat Q''$, such that $(p,\bnull^{2})=p'A+p''B$.
Observe that $(p,\bnull^{k-3})\in\cs\hat Q$ iff there exists $q$, s.t.
\begin{align}
(p,\bnull^{k-3})=\hat Qq^T\!=\!(\hat X \hat Q', \hat Z \hat Q'')q^T=\hat X \hat Q'q'^T+ \hat Z \hat Q''q''^T,
\label{eq:vq}
\end{align}
where $q=(q',q'')$.
Denote $a=\hat Q'q'^T$, $b=\hat Q''q''^T$.
Note that $a\in\cs\hat Q'$ and $b\in\cs\hat Q''$.
Thus, such $q$ in \eqref{eq:vq} exists iff 
\begin{align}
\exists a\in\cs\hat Q', b\in\cs\hat Q'': 
(p,\bnull^{k-3})^T=\hat X a^T+ \hat Z b^T.
\label{eq:t1ab}
\end{align}
The r.h.s. of  \eqref{eq:t1ab} are $a_i+b_i$ for the $2i$-th equation, and $a_i+a_{i+1}+b_i$ for the $(2i+1)$-th equation.
The first five equations of \eqref{eq:t1ab} are
{\allowdisplaybreaks
\ifonecol
\begin{align}
a_0+b_0=p_0,\;
a_0+a_1+b_0=p_1,\;
a_1+b_1=p_2, \;
a_1+a_2+b_1=0, \;
a_2+b_2=0 \label{eq:t1p0}
\end{align}
\else
\begin{align}
a_0+b_0=p_0,\;
a_0+a_1+b_0&=p_1,\;
a_1+b_1=p_2,\nonumber\\
a_1+a_2+b_1&=0, \;
a_2+b_2=0. \label{eq:t1p0}
\end{align}
\fi
Then, there are $k-5$ equations of the form
\begin{align*}
a_2+a_3+b_2=0 & \iff a_3=0 \text{ (since } a_2+b_2=0\text{)}\\
a_3+b_3=0 & \iff b_3=0 \text{ (since } a_3=0\text{)}\\
a_3+a_4+b_3=0 & \iff a_4=0 \text{ (since } a_3+b_3=0\text{)}
\end{align*}
}and so on. Thus, 
$
a_3^{k'-1}=b_3^{k'-1}=\bnull.
%\label{eq:t1ab0}
$
Since $a\in\cs\hat Q', b\in\cs\hat Q''$, by Def.~\ref{d:chi} the last $k-5$ equations are equivalent to $a_0^2\in\chi_\psi(\mE')$,
$b_0^2\in\chi_\psi(\mE')$ for kernel $Q^{(n/2)}$.
Combining this with \eqref{eq:t1p0}, one can prove the theorem, since \eqref{eq:t1} with \eqref{eq:t1vp1} are precisely \eqref{eq:t1p0}, written in matrix form for $p'=a$ and $p''=b$.

The other cases of $\vp$ can be proved similarly.

%%%%%%%%%%%%%%%%%%%%%%%%%%%%%%%%%%%%%%%%%%%%%%%%%%%%%%%%%%%%%%%%%%%%%%%%%%%%%%%%
\section{Proof of Theorem 2}
\label{a:t2}
First, fix some $\mE\in[n]$.
Let $\chi_\psi(\mE')=\mS'$ and $\chi_\psi(\mE'')=\mS''$.
By Theorem~\ref{t1}, $p_0^2\in\chi_\vp(\mE)\iff \exists p'\in\chi_\psi(\mE'), p''\in\chi_\psi(\mE'')$, such that $(p_0^2,\bnull^{J_\vp})\!=\!p'A_\vp\!+\!p''B_\vp$.
Substituting $\mS'=\chi_\psi(\mE')$ and $\mS''=\chi_\psi(\mE'')$ one obtains precisely the conditional part of \eqref{eq:tdef}.
Thus,
$
\chi_\vp(\mE)=\bfT_\vp(\chi_\psi(\mE'),\chi_\psi(\mE'')).
$
Using Definition~\ref{d:chi}, rewrite \eqref{eq:gpbdef} as
\begin{align}
P^{(\vp, \mS)}(x)=\sum_{\mE\in\chi^{-1}_\vp(\mS)}x^{|\mE|}.
\label{eq:gpbsum}
\end{align}
%On one hand, all erasure configurations, for which $\chi_\vp(\mE)=\mS$, are given by $\xi_\vp(\mS)$.
%On the other hand, for each erasure configuration $\mE$ of $c_0^{n-1}$, consider erasure configurations $\mE'$ and $\mE''$ of the left and the right half of $c_0^{n-1}$.
%Then, 
Observe that $\mE\in\chi^{-1}_\vp(\mS)$ iff $\exists (\mS', \mS'')\in\bfT_\vp^{-1}(\mS)$, such that $\mE'\in\chi^{-1}_\psi(\mS')$ and  $\mE''\in\chi^{-1}_\psi(\mS'')$.
The erasure configuration $\mE$ is bijectively defined by its ``halves'' $\mE'$ and $\mE''$, so can replace summation over $\chi^{-1}_\vp(\mS)$ in \eqref{eq:gpbsum} by two independent summations over $\chi^{-1}_\psi(\mS')$ and $\chi^{-1}_\psi(\mS'')$.
Obviously, $|\mE|\!=\!|\mE'|\!+\!|\mE''|$.
Thus, 
\ifonecol
\begin{align*}
P^{(\vp, \mS)}(x)&=
\!\!\sum_{(\mS',\mS'')\in\bfT^{-1}_\vp(\mS)} \sum_{\mE'\in\chi^{-1}_\psi(\mS')}\sum_{\mE''\in\chi^{-1}_\psi(\mS'')}x^{|\mE'|+|\mE''|}=\!\!\sum_{(\mS',\mS'')\in\bfT^{-1}_\vp(\mS)}\!\!\left(\sum_{\mE'\in\chi^{-1}_\psi(\mS')}x^{|\mE'|}\right)\!\cdot\!\left(\sum_{\mE''\in\chi^{-1}_\psi(\mS'')}x^{|\mE''|}\right)\\
&=\!\!\sum_{(\mS',\mS'')\in\bfT^{-1}_\vp(\mS)}R^{(\psi,\mS')}(x)\cdot R^{(\psi,\mS'')}(x).
\end{align*}
\else
\begin{align*}
&P^{(\vp, \mS)}(x)=
\!\!\sum_{(\mS',\mS'')\in\bfT^{-1}_\vp(\mS)} \sum_{\mE'\in\chi^{-1}_\psi(\mS')}\sum_{\mE''\in\chi^{-1}_\psi(\mS'')}x^{|\mE'|+|\mE''|}\\
&=\!\!\sum_{(\mS',\mS'')\in\bfT^{-1}_\vp(\mS)}\!\!\left(\sum_{\mE'\in\chi^{-1}_\psi(\mS')}x^{|\mE'|}\right)\!\cdot\!\left(\sum_{\mE''\in\chi^{-1}_\psi(\mS'')}x^{|\mE''|}\right)\\
&=\!\!\sum_{(\mS',\mS'')\in\bfT^{-1}_\vp(\mS)}R^{(\psi,\mS')}(x)\cdot R^{(\psi,\mS'')}(x).
\end{align*}
\fi

\section{On Decoding of CvPC with Matrix  $\tQ^{(n)}$}
\label{s:tq}
The decoder for convolutional polar codes with matrix $Q^{(n)}$ (e.g. \cite{morozov2018efficient}) computes at each phase
$\vp$ the vector log-likelihood 
\begin{align}
L_\vp[a,b,c]=\ln\!\!\!\!\max_{u_{\vp+3}^{n-1}\in\bF^{n-\vp-3}}\!W^n\left((\hat u_0^{\vp-1},a,b,c,u_{\vp+3}^{n-1})Q^{(n)}|y\right).
\label{eq:lvp}
\end{align}
The output LLR for symbol $u_\vp$, needed for hard decision, is defined as 
\begin{align}
S_\vp=\ln\frac{\max_{u_{\vp+1}^{n-1}} W^n\left((\hat u_0^{\vp-1},0,u_{\vp+1}^{n-1})Q^{(n)}|y\right)}
{\max_{u_{\vp+1}^{n-1}} W^n\left((\hat u_0^{\vp-1},1,u_{\vp+1}^{n-1})Q^{(n)}|y\right)},
\label{eq:svp}
\end{align}
and can be computed by marginalization
\begin{align*}
S_\vp=\max_{b,c}L_\vp[0,b,c]-\max_{b,c}L_\vp[1,b,c].
\label{eq:marg}
\end{align*}

Matrix $\tQ^{(n)}$ is obtained from matrix $Q^{(n)}$ by swapping some of pairs of adjacent rows $(2i,2i+1)$.
Formally,
\begin{align*}
\tQ_{2i,j}=\begin{cases}
Q_{2i,j} &i\notin\mJ\\
Q_{2i+1,j} &i\in\mJ
\end{cases},
\;\;\;\;
\tQ_{2i+1,j}=\begin{cases}
Q_{2i+1,j} &i\notin\mJ\\
Q_{2i,j} &i\in\mJ
\end{cases}
\end{align*}
where we denote by $\mJ\subset[n/2]$ the set of all $i$, for which rows $2i$ and $2i+1$ are swapped in $\tQ^{(n)}$.
In \eqref{eq:svp}, replace $Q^{(n)}$ with $\tQ^{(n)}$ and denote corresponding LLR by $\tS_\vp$.
Then, $S_{2i}=\tS_{2i}$ and $S_{2i+1}=\tS_{2i+1}$ for $i\notin\mJ$.

For $i\in\mJ$, values of $\tS_{2i}$ and $\tS_{2i+1}$ can be also obtained from vector log-likelihoods \eqref{eq:lvp} with
the only change in marginalization:
\begin{align*}
\tS_{2i}&=\max_{a,c}L_{2i}[a,0,c]-\max_{a,c}L_{2i}[a,1,c]\\
\tS_{2i+1}&=\max_{c}L_{2i}[0,\hat u_{2i},c]-\max_{c}L_{2i}[1,\hat u_{2i},c]
\end{align*}

So, the only difference between decoding with $Q^{(n)}$ and decoding with $\tQ^{(n)}$ is in final marginalization when converting
vector log-likelihood to the output LLR.

\bibliographystyle{IEEEtran}
% Generated by IEEEtran.bst, version: 1.14 (2015/08/26)

\end{document}